\documentclass[11pt]{article}
\usepackage{fullpage}

\usepackage{graphics}
\usepackage[dvips]{epsfig}

\usepackage{amsmath}
\usepackage{amssymb}
\usepackage{amsfonts}
\usepackage{graphicx}

\newtheorem{theorem}{Theorem}[section]
\newtheorem{lemma}{Lemma}[section]

\newtheorem{claim}{Claim}[section]

\newtheorem{definition}{Definition}[section]
\newtheorem{fact}{Fact}[section]

\newcommand{\qed}{\hfill $\Box$ \bigbreak}
\newenvironment{proof}{\noindent {\bf Proof.}}{\qed}

\newcommand{\lcm}{\mbox{lcm}}
\newcommand{\dist}{\mbox{dist}}
\newcommand{\caL}{{\cal{L}}}

\newcommand{\remove}[1]{}

%\newcommand{\qed}{\hfill $\square$ \smallbreak}
%\newenvironment{proof}{\noindent{\bf Proof:}}{\qed}

%--------------------- Macros Pierre ------------------------------------
\newcommand{\caA}{{\cal{A}}}
\newcommand{\BW}{\mbox{\sc bw}}
\newcommand{\CBW}{\mbox{\sc cbw}}

\newcommand{\prem}{\mbox{\tt prime}}
\renewcommand{\wedge}{\,|\,}
%--------------------------------------------------------------------------

\begin{document}

\baselineskip  0.2in %  0.2in %0.18in si on veut compact
\parskip     0.1in %    0.1in % 0.0in  pour compacter
\parindent   0.0in %    0.0in % 0.3in pour voir les paragraphes

\title{{\bf Delays Induce an Exponential Memory Gap\\ for Rendezvous in Trees}\thanks
{The results of this paper appeared in a preliminary form in papers:
P. Fraigniaud, A. Pelc, Deterministic rendezvous in trees with little memory, Proc. 22nd International Symposium on Distributed Computing (DISC 2008), LNCS 5218, 242-256, and P. Fraigniaud, A. Pelc, Delays induce an exponential memory gap for rendezvous in trees, Proc. 22nd Ann. ACM Symposium on Parallel Algorithms and Architectures (SPAA 2010), 224-232.  }}

\author{
Pierre Fraigniaud\thanks{CNRS, Universit\'e Paris Diderot - Paris 7, France.
E-mail: Pierre.Fraigniaud@liafa.jussieu.fr.
Part of this work was done during
this author's visit at the Research Chair in Distributed Computing of
the Universit\'{e} du Qu\'{e}bec en Outaouais. Additional supports from ANR projects ALADDIN and PROSE, and INRIA project GANG.}
\and
Andrzej Pelc\thanks{D\'{e}partement d'informatique, Universit\'{e} du Qu\'{e}bec en Outaouais,
Gatineau, Qu\'{e}bec J8X 3X7,
Canada. E-mail: pelc@uqo.ca.
Supported in part by NSERC discovery grant 
and by the Research Chair in Distributed Computing of
the Universit\'{e} du Qu\'{e}bec en Outaouais.}
}

\date{ }
\maketitle

\begin{abstract}

The aim of rendezvous in a graph is meeting of two mobile agents at some node of 
an unknown anonymous connected graph.  
In this paper, we focus on rendezvous in trees, and,  analogously to the efforts that have been made for solving the exploration problem with compact automata, we study
the size of memory of mobile agents that permits to solve the
rendezvous problem deterministically.
We assume that the agents are identical,
and move in synchronous rounds. 

We first show that if the delay between the
starting times of the agents is \emph{arbitrary}, then the lower bound on memory required for rendezvous is $\Omega(\log n)$ bits,
even for the line of length $n$. This lower bound meets a previously known upper bound of $O(\log n)$ bits for rendezvous in arbitrary  graphs of
size at most $n$. Our main result is a proof that
the amount of memory needed for rendezvous \emph{with simultaneous start} depends essentially on the number $\ell$ of leaves of the tree, and is exponentially less impacted by the number
$n$ of nodes. Indeed, we present two identical agents with $O(\log \ell + \log\log n)$
bits of memory that solve the rendezvous problem in all trees with at most $n$ nodes and at most $\ell$ leaves.
Hence, for the class of trees with polylogarithmically many leaves, there is an exponential gap in minimum
memory size needed for rendezvous between the scenario
with arbitrary delay and the scenario with delay zero.
Moreover, we show that our upper bound is optimal by proving that $\Omega(\log \ell + \log\log n)$ bits of memory are required for rendezvous, even in the class of trees with degrees bounded
by~3.

\vspace{2ex}

\noindent {\bf Keywords:} rendezvous, exploration, compact data structure. 
\end{abstract}

\vfill

\thispagestyle{empty}
\setcounter{page}{0}
\pagebreak

%%%%%%%%%%%%%%%%%%%%%%%%%%%%%%%%%%%%%%%%%%%%%%%%%%%%%%%%%%%
\section{Introduction}
%%%%%%%%%%%%%%%%%%%%%%%%%%%%%%%%%%%%%%%%%%%%%%%%%%%%%%%%%%%

The rendezvous in a network~\cite{alpern95a, alpern99} is the following task. Two identical mobile agents, initially located in two nodes of the network,
move along links from node to node, and eventually have to get to the same node at the same time. The network
is modeled as an undirected connected graph,  and agents traverse links in synchronous rounds.
They cannot leave any marks on visited nodes. In this paper we consider
deterministic rendezvous in trees,
and seek rendezvous protocols that do not
rely on the knowledge of node labels, and can work in anonymous trees as well  (cf. \cite{alpern02b}). 
This assumption is motivated by the fact that, even when nodes are equipped with distinct labels, agents may be unable to perceive them,
or nodes may refuse to reveal their labels, e.g., due to security reasons. (Note also that if nodes of the network are labeled using distinct names, then agents can meet at some a priori agreed node, and rendezvous reduces to graph exploration). On the other hand,
edges incident to a node $v$ have distinct labels in 
$\{0,\dots,d-1\}$, where $d$ is the degree of $v$. Thus every undirected
edge $\{u,v\}$ has two labels, which are called its {\em port numbers} at $u$
and at $v$.  (In the absence of port numbers, rendezvous is usually impossible, as the adversary may prevent an agent from taking some edge incident to the current node). A function assigning port numbers to every edge is called a {\em port labeling}. 
Port labeling is {\em local}, i.e., there is no relation between
port numbers at $u$ and at $v$  (we do not assume any sense of direction, of any kind). 

The aim of the present paper is to determine the space complexity of rendezvous in trees. 
We assume that the port labeling is decided by an adversary aiming at preventing two agents from meeting, or at allowing the agents to meet only after having consumed a lot of resources, e.g., memory space. Hence, we adopt the following definition.

\begin{definition}
A pair of agents initially placed at  nodes $u$ and $v$ of 
a tree $T$ solves the rendezvous problem if, for any port labeling of 
$T$, both agents are eventually in the same node of the tree in the 
same round. 
\end{definition}

It is easy to characterize the initial positions $u$ and $v$ of a tree $T$ for which rendezvous is feasible. Recall that an automorphism of the tree is a bijection
$f: V \to V$, where $V$ is the set of nodes of the tree, such that for any $w,w'\in V$, $w$ is adjacent to $w'$ if and only if  $f(w)$ is adjacent to $f(w')$. 
It preserves a given port labeling $\mu$, if for any $w,w'\in V$, the port number corresponding to edge $\{w,w'\}$ at node $w$ is equal to the port
number corresponding to edge $\{f(w),f(w')\}$ at node $f(w)$. 
Nodes $u$ and $v$ of a tree are called {\em topologically symmetric}, if there exists
an automorphism $f$ of the tree, such that $f(u)=v$.
Nodes $u$ and $v$ of a tree with labeling $\mu$ are called {\em symmetric} with respect to this labeling, if there exists
an automorphism $f$ of the tree preserving this port labeling, such that $f(u)=v$.
It is well known (cf., e.g., \cite{CKP}) that rendezvous with simultaneous start in a tree $T$ with a given port labeling $\mu$  is feasible, if and only if the initial positions $u$ and $v$ of agents are not symmetric with respect to this labeling.
%\footnote{For arbitrary graphs the feasibility condition is slightly different, but it reduces to the above for the case of trees.}
Thus the following notion is crucial for our considerations.

\begin{definition}
Nodes $u$ and $v$ of a tree $T=(V,E)$ are {\em perfectly symmetrizable} if there exists a port labeling $\mu$ of $T$
and an automorphism of the tree preserving $\mu$
that carries one node on the other. 
\end{definition}

Note that two nodes  that are perfectly symmetrizable  are necessarily topologically symmetric. On the other hand, two topologically symmetric nodes may not be perfectly symmetrizable. Typical examples are provided by the paths (or lines) with odd numbers of nodes, and by complete binary trees. In both cases, two leaves  are topologically symmetric while they are not perfectly symmetrizable. 

According to the above definitions, one can reformulate the feasibility of rendezvous as follows. 

\begin{fact}
A pair of agents can solve the rendezvous problem in a tree, if and only if their initial positions are not perfectly symmetrizable. 
\end{fact}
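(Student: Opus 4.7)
The plan is to derive this fact as the ``universal closure over $\mu$'' of the per-labeling feasibility characterization from \cite{CKP} recalled just before the statement, exploiting the fact that the definition of solving rendezvous quantifies over all port labelings.

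For the ``only if'' direction I argue by contraposition. If $u$ and $v$ are perfectly symmetrizable then, by the definition of perfect symmetrizability, there exists a specific port labeling $\mu_0$ and an automorphism $f$ of $T$ preserving $\mu_0$ with $f(u)=v$; thus $u$ and $v$ are symmetric with respect to $\mu_0$, and by the cited characterization rendezvous is infeasible under $\mu_0$. Since the definition of solving rendezvous requires success for \emph{every} port labeling, no algorithm can meet this requirement, so the agents do not solve the rendezvous problem. A more concrete, self-contained rephrasing of the same obstruction is the symmetric-moves argument: two identical deterministic agents placed at $u$ and $v=f(u)$ observe the same sequence of local port numbers, so at every round their positions remain $f$-paired, and because a label-preserving automorphism that moves some vertex is easily checked to have no fixed vertex, the two agents never coincide.

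For the ``if'' direction, assume that $u$ and $v$ are not perfectly symmetrizable. Then for every port labeling $\mu$ there is no automorphism of $T$ preserving $\mu$ that maps $u$ to $v$, so $u$ and $v$ are non-symmetric with respect to every $\mu$ and rendezvous is feasible for each $\mu$ by the cited result. What remains is to produce a \emph{single} algorithm that succeeds against every $\mu$, since the agents must commit to their moves based only on the ports they observe, not on $\mu$. The standard template is: both agents run a universal exploration sequence of $T$, each recording the port numbers seen along the way; this lets each agent reconstruct the port-labeled tree together with its own starting position, and because the reconstructed tree admits no label-preserving automorphism swapping the two initial positions, the agents apply a canonical graph-theoretic tie-breaking rule to agree on a common meeting node and on a common round at which both have finished exploring.

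The main obstacle is precisely this last point in the ``if'' direction: turning per-labeling feasibility into a uniform, labeling-oblivious algorithm is not a purely logical step, since the algorithm cannot depend on $\mu$ in advance. Exploration followed by canonical symmetry-breaking is what carries this step; the ``only if'' direction, by contrast, is essentially a direct contrapositive of the cited characterization applied to the labeling witnessing perfect symmetrizability.
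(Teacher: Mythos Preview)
The paper does not actually prove this fact; it presents it as an immediate reformulation of the per-labeling characterization recalled from \cite{CKP}, implicitly relying on the fact that the rendezvous algorithms in the literature (e.g., the $O(\log n)$-memory algorithm of \cite{CKP}) are already uniform in the port labeling. Your write-up is therefore more explicit than the paper's.

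Your ``only if'' direction is correct and self-contained; the observation that a nontrivial port-preserving automorphism of a tree has no fixed vertex is indeed easy (a fixed vertex together with port preservation forces all its neighbours to be fixed, hence the map is the identity by induction on distance from that vertex).

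Your ``if'' direction has the right structure but contains one imprecision. You say that after exploration the agents ``agree on a common meeting node''. When $(T,\mu)$ happens to be symmetric (central edge, two port-isomorphic halves), no single node can be canonically singled out from $(T,\mu)$ alone, so the agents cannot both compute the same target independently of their own positions --- and neither agent knows where the other started. What does work, and what your sketch is really pointing to, is: each agent computes the canonical encoding of $(T,\mu)$ \emph{rooted at its own position}; these two encodings differ precisely because $u$ and $v$ are not symmetric with respect to $\mu$; the agent with the lexicographically smaller encoding waits while the other performs a full traversal of the tree. Since the exploration phase takes the same number of rounds from any starting node, the two agents are still synchronized when this second phase begins, and the traversing agent is guaranteed to hit the waiting one. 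This is the standard reduction to rendezvous with distinct labels, and with this adjustment your argument is complete.
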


Consequently, throughout the paper, we consider only non perfectly symmetrizable initial positions of the agents. 

%--------------------------------------------------
\subsection{Our results}
\label{subsec:ourresults}
%--------------------------------------------------

We first show that if the delay between the
starting times of the agents is arbitrary, then the lower bound on memory required for rendezvous is $\Omega (\log n)$ bits,
even for the line of length $n$. This lower bound matches the upper bound from \cite{CKP} valid for arbitrary graphs.

Our main positive result is a proof that
the amount of memory needed for rendezvous \emph{with simultaneous start} in trees depends essentially on the number $\ell$ of leaves of the tree, and is exponentially less impacted by the number
$n$ of nodes.
 Indeed, we show two identical agents with $O(\log \ell + \log\log n)$
bits of memory that solve the rendezvous problem in all trees with $n$ nodes and $\ell$ leaves.
Hence, for the class of trees with polylogarithmically many leaves, there is an exponential gap in minimum
memory size needed for rendezvous between the scenario
with arbitrary delay and the scenario with delay zero. 

Moreover, we show that the size $O(\log \ell + \log\log n)$ of memory needed for rendezvous is optimal, even in the class of trees with degrees bounded
by 3. More precisely, we prove two lower bounds. First,   for infinitely many integers $\ell$, we show a class of arbitrarily large trees with maximum degree 3
and with $\ell$ leaves, for which rendezvous with simultaneous start requires $\Omega (\log \ell)$ bits of memory. Second, we show
that $\Omega(\log \log n )$ bits of memory  are required for rendezvous with simultaneous start in the line of length $n$. 
These two bounds together imply that our upper bound $O(\log \ell + \log\log n)$ cannot be improved, even for the class of trees with maximum degree 3.

\subsection{Bibliographic note}

Note that our definition of solving the rendezvous problem is stronger than the definition 
used in the conference versions \cite{FP,FP2} of this paper. Indeed, rendezvous 
should occur \emph{for any port labeling}. As opposed to what is 
claimed in~\cite{FP2}, the exponential gap described in this paper 
does not carry over to the case where the ability of achieving rendezvous 
may depend on the port labeling.

More precisely, it was claimed in \cite{FP2} that the
positive result concerning the size $O(\log \ell + \log\log n)$ of memory for which rendezvous with simultaneous start is possible,
holds for arbitrary initial positions that are not symmetric with respect to a given port labeling $\mu$ of the tree in which agents operate.
This result is in fact incorrect in this formulation. Indeed, it has been recently proved in \cite{CKP2} that, for some 
port labeling of a line and some initial positions that are not symmetric {\em with respect to this labeling},
rendezvous with simultaneous start requires a logarithmic number of bits, while $\ell=2$ for the line. However, our positive result holds
for agents starting from arbitrary non perfectly symmetrizable initial positions. 
The algorithm and its analysis remain similar as in~\cite{FP2}. (The exact place where the provided arguments do not extend to the case where the ability of achieving rendezvous 
may depend on the port labeling will be pointed out to the reader).

On the other hand, all negative results from \cite{FP} and \cite{FP2}
hold in the present setting as well. 

%--------------------------------------------------
\subsection{Related work}
\label{subsec:relatwork}
%--------------------------------------------------

The rendezvous problem was first mentioned in \cite{schelling60}. Authors investigating  rendezvous (cf.
\cite{alpern02b} for an extensive survey) 
considered either the geometric scenario (rendezvous in an interval of the real line, see, e.g.,  \cite{baston98,baston01,gal99},
or in the plane, see, e.g., \cite{anderson98a,anderson98b}), or rendezvous in networks, see e.g., \cite{DFKP,TSZ07,YY}. Many papers, e.g., \cite{alpern95a,alpern02a,anderson90,baston98,israeli} study
the probabilistic setting: inputs and/or rendezvous strategies are random. 

A lot of effort has been dedicated to the study of the feasibility of rendezvous, and to the time required to achieve this task, when feasible. For instance, deterministic rendezvous with agents equipped with tokens used to mark nodes was considered, e.g., in~\cite{KKSS}. Deterministic rendezvous of agents equipped with unique labels was discussed in \cite{DGKKP,DFKP,KM}. (In this latter scenario, symmetry is broken by the use of the different labels of agents, and thus rendezvous is sometimes possible even for strongly symmetric
 initial positions of the agents). Recently, rendezvous using variants of Universal Traversal Sequences was investigated in~\cite{TSZ07}.
Surprisingly though, as opposed to what was done for the graph exploration problem (see, e.g., \cite{CR80,FI04,koucky01b,R08}), or for other tasks such as routing (see, e.g., \cite{FG02,FG01}), few papers were devoted to study the amount of memory required by the agents for achieving rendezvous. Up to our knowledge, the only existing results prior to the conference papers \cite{FP,FP2} on which the present paper is based were dedicated to rendezvous in rings. Memory needed for randomized rendezvous in the ring is discussed, e.g., in~\cite{KKPM08}. In the recent paper \cite{CKP} the authors showed that deterministic rendezvous can be solved in arbitrary
$n$-node graphs using $O(\log n)$ memory  bits (for arbitrary delay between starting times of the agents) and that this number of bits is necessary, even in rings and
even for simultaneous start. Tradeoffs between time of rendezvous in trees and the size of memory of the agents are studied in \cite{CKP2}. 
The impact of memory size on the feasibility of the related task of tree exploration, 
for trees with unlabeled nodes, has been studied in \cite{DFKP2,GPRZ}.

A natural extension of the rendezvous
problem is that of gathering \cite{fpsw,israeli,lim96,thomas92}, when more than two agents have to meet in one location.
In~\cite{YY} the authors considered rendezvous of many agents
with unique labels. 

Apart from the synchronous model used in this paper, several authors have investigated asynchronous rendezvous in the plane \cite{CFPS,fpsw} and in network environments
\cite{BIOKM,CLP,DGKKP}.
In the latter scenario the agent chooses the edge which it decides to traverse but the adversary controls the speed of the agent. Under this assumption rendezvous
in a node cannot be guaranteed even in very simple graphs, hence the rendezvous requirement is relaxed to permit the agents to meet inside an edge. 

%%%%%%%%%%%%%%%%%%%%%%%%%%%%%%%%%%%%%%%%%%%%%%%%%%%%%%%%%%%
\section{Framework and Preliminaries}\label{sec:preliminaries}
%%%%%%%%%%%%%%%%%%%%%%%%%%%%%%%%%%%%%%%%%%%%%%%%%%%%%%%%%%%

%----------------------------------------
\subsection{Model}
%----------------------------------------

We consider mobile agents traveling in trees with
locally labeled ports. The tree and its size are a priori unknown to
the agents. We first define precisely an individual agent. An agent is an abstract state machine 
$\caA=(S,\pi,\lambda,s_0)$, where $S$ is a set of states among which there is
a specified state $s_0$ called the {\em initial} state, $\pi:S\times
\mathbb{Z}^2 \to S$, and $\lambda:S\to \mathbb{Z}$. Initially the agent is at some node
$u_0$ in the initial state $s_0\in S$. The agent performs actions
in rounds measured by its internal clock. Each action can be either a move to an adjacent
node or a null move resulting in remaining in the currently occupied node. 
State $s_0$ determines a natural 
number $\lambda(s_0)$. If $\lambda(s_0)=-1$ then the agent makes a null move (i.e., remains at $u_0$).
If $\lambda(s_0)\geq 0$ then the agent leaves $u_0$ by port $\lambda(s_0)$ modulo the degree of $u_0$. When
incoming to a node $v$ in state $s\in S$, the behavior of the agent is as follows.
It reads the number $i$ of the port through which it entered $v$ and
the degree $d$ of $v$. The pair $(i,d)\in \mathbb{Z}^2$ is an input symbol
that causes the transition from state $s$ to state
$s'=\pi(s,(i,d))$. If the previous move of the agent was null,
(i.e., the agent stayed at node $v$ in state $s$) then the pair $(-1,d)\in \mathbb{Z}^2$ is the input symbol
read by the agent, that causes the transition from state $s$ to state
$s'=\pi(s,(-1,d))$. 
In both cases $s'$ determines an integer
$\lambda(s')$, which is either $-1$, in which case the agent makes a null move, or
a non negative integer indicating a port number 
by which the agent leaves $v$ (this port is $\lambda(s')\bmod d$). The agent continues
moving in this way, possibly infinitely.

Since we consider the rendezvous problem for identical agents, we assume that agents are
copies $A$ and $A'$ of the same abstract state machine $\caA$, starting at two distinct nodes $v_A$ and $v_{A'}$, 
called the {\em initial positions}.
We will refer to such identical machines as a {\em pair of agents}. It is assumed that the internal clocks of
a pair of agents tick at the same rate. The clock of each agent starts when the agent starts executing its actions. 
Agents start from their initial position
with  {\em delay} $\theta \geq 0$, controlled by an adversary. This means that the later agent starts executing its
actions $\theta$ rounds after the first agent. Agents do not know which of them is first and what is the value of $\theta$. 
We seek agents with small memory, measured by the number of states of the corresponding automaton, or equivalently by the number of bits on which these states are encoded. An automaton with $K$ states requires $\Theta (\log K)$ bits of memory.

We say that a pair of agents solves the rendezvous problem {\em  with arbitrary delay} (resp.  {\em with simultaneous start}) 
 in a class  of trees, if, for any tree in this class, for any port labeling of this tree,
and for any initial positions that are not perfectly symmetrizable, both agents are eventually in the same node of the tree in the same round,
regardless of the starting rounds of the agents (resp. provided that they start in the same round). 

%----------------------------------------
\subsection{Preliminary results}
%----------------------------------------

Consider any tree $T$ and the following sequence of trees constructed recursively: $T_0=T$, and $T_{i+1}$ is the tree obtained from
$T_i$ by removing all its leaves. $T'=T_j$  for the smallest $j$ for which $T_j$ has at most two nodes. If $T'$ has one node, then
this node is called the {\em central node} of $T$. If $T'$ has two nodes, then the edge joining them is called the {\em central edge} of $T$.
A tree $T$ with a port labeling $\mu$ is called {\em symmetric}, if there exists a non-trivial automorphism $f$ of the tree  (i.e., an automorphism $f$ such that $f(u) \neq u$, for some $u \in V$)
preserving this port labeling.  
If a tree with port numbers has a central node, then it cannot be symmetric. 

We define the ``basic walk" starting at node $v$ the walk resulting from an agent performing the following actions: leave node $v$ by port $0$, and, perpetually, whenever entering a degree-$d$ node by port $i\in\{0,\dots,d-1\}$, leave that node by port $(i+1) \bmod d$. Of course, a basic walk can be bounded to perform for $t$ steps (instead of perpetually), in which case we refer to a basic walk of length $t$.  Note that a basic walk of length $2(n-1)$ in an $n$-node tree returns to its starting node. 

The following statement is an easy consequence of the techniques and results from \cite{GPRZ}.

\begin{fact}\label{fact}
There exists an agent accomplishing the following task in an arbitrary tree: using $O(\log m)$ 
bits of memory, it finds the number $m$ of nodes in the tree, returns and stops at its initial position, and detects whether the tree has a central node, or has a central edge but is not symmetric, or has a central edge and is symmetric. Moreover, 
\begin{itemize}
\item
if the tree has a central node $x$, then the agent finds the minimum number of steps of a basic walk
from its initial position to the central node $x$;
\item
if the tree has a central edge $e=\{x,y\}$ but is not symmetric, then, for every initial position, 
the agent finds the minimum number of steps of a basic walk
from its initial position to the \emph{same} extremity $x$ of the central edge;
moreover, it knows which port at this extremity corresponds to the central edge; 
\item
if the tree is symmetric, then the agent finds the minimum number of steps of a basic walk
from its initial position to the farthest extremity\footnote{Why the farthest and not the closest is for technical reasons that should appear clear further in the text.} of the central edge; moreover, it knows which port at this extremity corresponds to the central edge. 
\end{itemize}
\end{fact}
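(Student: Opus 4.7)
The plan is to invoke the tree exploration machinery of \cite{GPRZ} as a black box: that work provides a finite automaton which, using $O(\log m)$ bits, executes a canonical depth-first traversal of any $m$-node tree from an arbitrary starting vertex, returns to its starting position, and at every moment knows its current location via an $O(\log m)$-bit canonical encoding (essentially a compressed port-sequence from the start). A single such traversal has length $2(m-1)$, so the agent reads off $m$ simply by counting its own steps between departure and return.

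With the tree thereby ``explorable'' in $O(\log m)$ memory, I would compute the center via the standard two-DFS diameter algorithm, adapted to the finite-memory setting. A first exploration from the starting node $v$ identifies, and stores the canonical signature of, a node $u$ at maximum distance from $v$. A second exploration, rooted at $u$, identifies a node $w$ at maximum distance $D$ from $u$; then the unique $u$-to-$w$ path is a diameter of $T$, and its middle is the center: a single central node when $D$ is even, and a central edge when $D$ is odd. Throughout these phases the agent maintains only a constant number of $O(\log m)$-bit counters and canonical position signatures.

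When $D$ is odd, I still need to test whether the tree admits a port-preserving automorphism swapping the two extremities of the central edge. I would do this by performing two canonical DFS traversals, one of each subtree obtained by deleting the central edge, and comparing online the sequences of \emph{(incoming port, degree, outgoing port)} triples produced by the two traversals; the tree is symmetric with respect to the central edge iff the two sequences coincide. This comparison adds only $O(\log m)$ bits of state on top of the exploration machinery.

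Finally, to output the required basic-walk distance from the initial position to the designated extremity of the center, I would simulate the basic walk step by step, using the canonical position encoding to recognize the first moment when the current node matches the pre-computed target signature; the step counter is bounded by $2(m-1)$ and thus fits in $O(\log m)$ bits. In the symmetric case the ``farthest extremity'' target is selected by recording, during one of the earlier explorations, the distances from the initial position to the two extremities of the central edge and keeping the larger one. The main obstacle I anticipate is the bookkeeping required to compose these several phases cleanly within $O(\log m)$ bits---in particular, guaranteeing that target signatures remain consistent across re-runs of the exploration---but because every stored quantity is canonical and of logarithmic size, the composition should go through in a routine manner.
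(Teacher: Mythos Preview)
The paper does not give a proof of this fact at all: it is stated as ``an easy consequence of the techniques and results from~\cite{GPRZ}'' and is used as a black box throughout. Your proposal is therefore not competing with any argument in the paper; it is a plausible elaboration of what the authors had in mind, and it invokes exactly the same external ingredient (\cite{GPRZ}) that they do.

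A few small points where your sketch is slightly loose, none of them fatal. First, be careful with the phrase ``$O(\log m)$-bit canonical encoding (essentially a compressed port-sequence from the start)'': a port-sequence along a root-to-node path is not $O(\log m)$ bits in general. What actually works, and what the paper implicitly relies on, is encoding a node by the index of its first visit in the basic walk (or in the \cite{GPRZ} traversal); that index is at most $2(m-1)$ and hence $O(\log m)$ bits. Second, your symmetry test (equality of the two canonical DFS transcripts of the subtrees hanging off the central edge) is correct for detecting a port-preserving isomorphism between the two halves, but you must also check that the port numbers of the central edge at its two endpoints coincide, since a port-preserving automorphism swapping the endpoints forces this. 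Third, in the non-symmetric central-edge case the fact requires that \emph{every} starting position lead to the \emph{same} extremity $x$; your sketch should say explicitly how the tie is broken canonically (e.g., lexicographic comparison of the two subtree transcripts), since otherwise two agents starting at different nodes could disagree. Finally, ``farthest extremity'' in the symmetric case is measured by minimum basic-walk length, not graph distance, so record those counts rather than BFS distances. With these adjustments your outline goes through and matches the authors' intended use of~\cite{GPRZ}.
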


In the sequel, the procedure accomplishing the above task starting at node $v$ will be called Procedure {\tt Explo}$(v)$.

%%%%%%%%%%%%%%%%%%%%%%%%%%%%%%%%%%%%%%%%%%%%%%%%%%%%%%%%%%%
\section{Rendezvous with arbitrary delay}
%%%%%%%%%%%%%%%%%%%%%%%%%%%%%%%%%%%%%%%%%%%%%%%%%%%%%%%%%%%

It was proved in \cite{CKP} that rendezvous with arbitrary delay can be accomplished in arbitrary $n$-node graphs using $O(\log n)$ bits of memory. On the other hand, observe
that rendezvous requires $\Omega(\log n)$ bits of memory in arbitrarily large trees with $2n+1$ nodes and maximum degree $n$.
The lower bound examples are trees $T_n$ consisting 
of two nodes $u$ and $v$ of degree $n$, both linked to a common node $w$, and to $n-1$ leaves. However, these trees
have linear degree and the reason for the logarithmic memory requirement is simply that agents with smaller memory are incapable
of having an output function $\lambda$ with range of linear size, and thus the adversary can place one agent in node $u$, the other in a leaf adjacent to $v$, and distribute
ports in such a way that none of the agents can ever get to node $w$, which makes rendezvous infeasible, in spite of non perfectly symmetrizable initial positions. 

This example leaves open the question if rendezvous with sub-logarithmic memory is possible, e.g., in all trees with constant maximum degree.
It turns out that if the delay is arbitrary, this is not the case: rendezvous requires logarithmic memory even for the class of lines.

\begin{theorem}\label{delay}
Rendezvous with arbitrary delay in the $n$-node line requires agents with $\Omega(\log n)$ bits of memory. 
\end{theorem}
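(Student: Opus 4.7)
The plan is to show that any agent $\caA$ with $K$ states solving rendezvous with arbitrary delay on every $n$-node line must satisfy $K=\Omega(n)$, which immediately yields the desired $\Omega(\log n)$ bits of memory. The strategy is to analyze $\caA$ on the ``canonical'' port labeling $\mu^c$ of the line (port $0$ to the left neighbor and port $1$ to the right neighbor at every interior node; the unique port at an endpoint is $0$). Under $\mu^c$ the reduced configuration (state, entry port) lives in a set of size at most $2K$ and evolves by a rule that is independent of the current position, so the sequence of reduced configurations is eventually periodic of some period $P\le 2K$, and the agent's position increases by a fixed signed displacement $\Delta\in\{-P,\dots,P\}$ per period.

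In the case $\Delta=0$, after a transient of length at most $2K$ the agent's position on the bi-infinite canonical-labeled line is confined to a window of width at most $4K$ around its starting node. Take $n>12K$ and place the two agents at $u=3K+1$ and $v=n-3K-3$ on the $n$-node line labeled by $\mu^c$. Since $u+v=n-2\neq n-1$, the positions $u$ and $v$ are not topologically symmetric, hence not perfectly symmetrizable for any labeling, and the rendezvous hypothesis applies. Both positions lie at distance greater than $2K$ from either endpoint, so each agent's trajectory on the finite line coincides with its bi-infinite-line trajectory and therefore remains inside its window. The two windows $[K+1,5K+1]$ and $[n-5K-3,n-K-3]$ are disjoint for $n>10K+4$, so the two agents never share a node, regardless of the delay. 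This case alone rules out all $K$-state agents with $K<n/12$, and gives $K=\Omega(n)$.

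In the case $\Delta\neq 0$, say $\Delta>0$, both agents drift on $\mu^c$ in the same direction at the same speed $\Delta/P$, eventually bounce at the endpoints, and enter a common long-term oscillation of period $T$. Here the plan is to exploit the arbitrary delay. In one period of the oscillation each node is visited only a bounded number of times (essentially $O(K)$), so the set of delays $\theta\in[0,T)$ for which the two periodic trajectories coincide at some moment has cardinality at most a polynomial in $K$ and $n$, whereas $T$ itself grows at least linearly with $n$. For $n$ sufficiently polynomial in $K$ the adversary can therefore choose $\theta$ in the complement, leaving the two orbits permanently out of phase. The resulting polynomial lower bound on $K$ in terms of $n$ suffices to conclude $\Omega(\log n)$ bits.

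The chief technical difficulty lies in the drifting case. With simultaneous start and the canonical labeling, the two oscillating agents meet at each ``opposite-direction'' crossing, so the obstruction to rendezvous must come entirely from the asymmetry introduced by the delay. Pinning down a specific bad $\theta$ and verifying that the two periodic orbits remain disjoint at all times is the heart of the argument, and has to be done for arbitrary reduced-configuration periods $P\le 2K$; the hardest instances are automata (such as the trivial ``pass-through'' agent) whose trajectory cannot be trapped by any labeling, which is precisely where the freedom of the delay becomes indispensable.
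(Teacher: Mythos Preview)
Your treatment of the case $\Delta=0$ is fine (modulo cosmetic constants), but it only disposes of automata that stay bounded on the bi-infinite canonically labeled line. The sentence ``This case alone rules out all $K$-state agents with $K<n/12$'' is not correct as written: it rules out only those $K$-state agents whose drift on $\mu^c$ vanishes. The whole weight of the theorem therefore rests on the $\Delta\neq 0$ case, and there your argument is a genuine gap rather than a sketch that can be filled in routinely.

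Two concrete obstructions:
\begin{itemize}
\item The counting you propose does not close. If $V_p$ denotes the set of times in one period at which the orbit is at node $p$, then the set of ``bad'' shifts is contained in $\bigcup_p (V_p - V_p)$, of size at most $\sum_p |V_p|^2$. Since $|V_p|\le 2K$ and $\sum_p|V_p|=T$, this only gives the useless bound $2KT$. Already for the one-state basic-walk automaton on $\mu^c$ one has $T=2(n-1)$ and every even shift is bad, so half of all shifts are bad; no ``polynomial in $K$ versus linear in $n$'' separation exists. What saves that particular example is a parity argument, not counting, and you offer no substitute for general automata that may stall.
\item Neither premise of the counting is established. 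You assert that $T$ grows at least linearly in $n$, but an automaton can drift to one endpoint and then enter a period of length $O(K)$ that never leaves a $K$-neighbourhood of that endpoint; then $T$ is independent of $n$. You also do not control the transient: even if a good phase $\sigma$ existed for the periodic part, you must exhibit a delay $\theta$ realising that $\sigma$ \emph{and} avoiding any collision while agent $2$ is still in its transient, during which agent $1$ may be anywhere on the line.
\end{itemize}

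The paper sidesteps all of this by abandoning the canonical labeling. It works on the $2$-edge-colored line (each edge carries the same port number at both endpoints), which has a port-preserving reflection. On that labeling the state sequence at degree-$2$ nodes is position-independent, so pigeonhole within $K+1$ steps yields two nodes $x_1,x_2$ visited in the same state $s$. One then builds a line of length $O(K)$, places the agents at two specific non-symmetrizable nodes, and chooses $\theta$ so that the two agents reach the mirror-image nodes $x_1$ and $y_1$ simultaneously and in the same state $s$; from that instant the reflection symmetry keeps them at mirror positions forever, and before that instant they were on opposite sides of the central edge. The delay is used once, to manufacture symmetry, rather than to dodge an orbit. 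If you want to rescue your approach, you would need either a structural reason why a collision-free phase always exists on $\mu^c$ (parity is not enough once null moves are allowed), or you should switch to the edge-colored labeling, where the problem becomes a one-line pigeonhole.
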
 

\begin{proof}
Let $k$ be the number of memory bits of the agent and $K=2^k$ be its number of states.
Place one agent at some node $u$ of the infinite line where each edge has the same port number at its two extremities. In any interval of length $K+1$ there exist two nodes
at which the agent is in the same state. Let $x_1$ be the first node of the trajectory of the agent in which this happens
and let $s$ be the state of the agent at $x_1$.  Let $x_2$ be the second node of the trajectory of the agent 
at which the agent is in state $s$. Let $\delta$ be the distance between $u$ and $x_1$ 
and let $d$ be the distance between $x_1$ and $x_2$.

We construct the following instance of the rendezvous problem (see Fig.~\ref{fig:lowerbound}). The line is of length $8(K+1)+1$.
Let $e$ be the central edge of this line. Assign number 0 to ports leading to edge $e$ from both its extremities,
and assign other port labels so that ports leading to any edge at both its extremities get the same number 0 or 1.
(This is equivalent to 2-edge-coloring of the line). Let $z$ be the endpoint of the line, for which $x_1$ is between $z$ and $x_2$.
Let $y_1$ and $y_2$ be symmetric images of $x_1$ and $x_2$, respectively, according to the axis of symmetry of the line.
Let $y_0$ be the node  distinct from $y_2$, at distance $d$ from $y_1$. Let $v$ be the node at distance $\delta$ from $y_0$,
such that the vectors $[x_1,u]$ and $[y_0,v]$ have opposite directions. The other agent is placed at node $v$.

\begin{figure}[h]
\begin{center}
\includegraphics[width=0.9\linewidth]{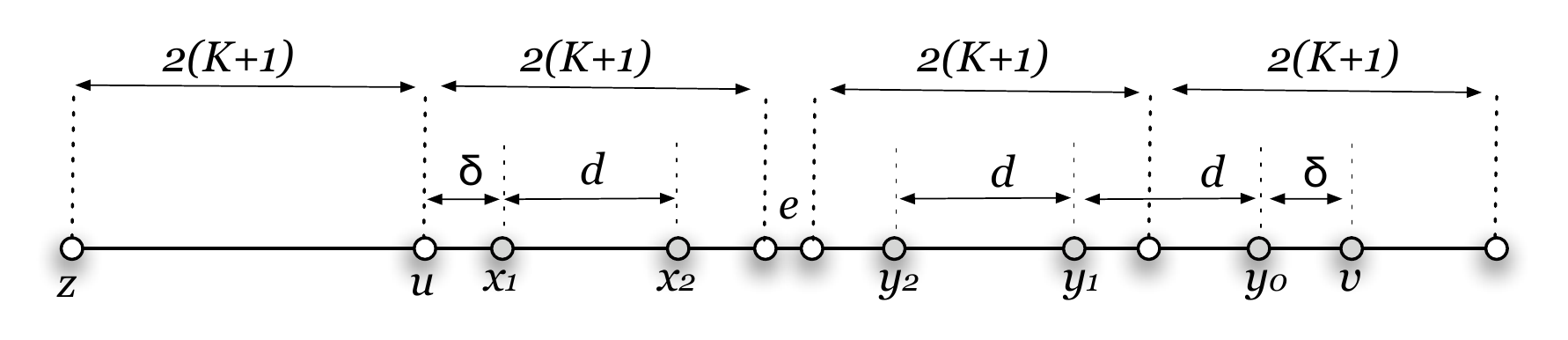}
\caption{Construction in the proof of Theorem~\ref{delay}.}
\label{fig:lowerbound}
\end{center}
\end{figure}

Let $t_1$ be the number of rounds  that the agent starting at $u$ takes to reach\footnote{We say that the agent {\em reaches} node $v$ in state $s$, if $s$ is the state in which the agent leaves $v$,
i.e., it leaves $v$ by port $\lambda(s).$} $x_1$ in state $s$.
Let $t_2$ be the number of rounds  that the agent starting at $v$ takes to reach  $y_1$ in state $s$. Let $\theta=t_2-t_1$.
The adversary delays  the agent starting at $u$ by $\theta$ rounds. Hence the agent starting at $u$ reaches $x_1$ at the same
time $t$ and in the same state as the agent starting at $v$ reaches $y_1$. The points $x_1$ and $y_1$ are symmetric positions, hence rendezvous is impossible after time $t$. Before time $t$ the two agents were on different sides of edge $e$,
in view of $\delta +d \leq 2(K+1)$, hence rendezvous did not occur, 
although the initial positions of the agents are not a perfectly symmetrizable pair. The size of the line is $O(K)=O(2^k)$, which concludes the proof.
\end{proof}

Together with the logarithmic upper bound from \cite{CKP},  the above result completely solves the problem of determining the minimum memory
of the agents permitting rendezvous with arbitrary delay. Hence in the rest of the paper we concentrate on rendezvous with simultaneous start,
thus assuming that the delay $\theta=0$.

%%%%%%%%%%%%%%%%%%%%%%%%%%%%%%%%%%%%%%%%%%%%%%%%%%%%%%%%%%%
\section{Rendezvous with simultaneous start}
%%%%%%%%%%%%%%%%%%%%%%%%%%%%%%%%%%%%%%%%%%%%%%%%%%%%%%%%%%%

%-----------------------------------------------------
\subsection{Upper bound}
\label{sec:alg}
%-----------------------------------------------------

It turns out that the size of memory needed for rendezvous with simultaneous start depends on two parameters of the tree:
the number $n$ of nodes and the number $\ell$ of leaves. In fact we show that rendezvous in trees with $n$ nodes and $\ell$ leaves
can be done using only $O(\log \ell + \log\log n)$ bits of memory. Thus, for trees with polylogarithmically many leaves, $O(\log\log n)$
bits of memory are enough. In view of Theorem \ref{delay}, this shows an exponential gap in the minimum memory size
needed for rendezvous between the scenarios with arbitrary delay and with delay zero.

\begin{theorem} \label{theo:upperbound}
There is a pair of identical agents solving rendezvous with simultaneous start in all trees, and using, for any integers $n$ and $\ell$, $O(\log \ell + \log\log n)$ bits of memory in trees with at most $n$ nodes and at most $\ell$ leaves.
\end{theorem}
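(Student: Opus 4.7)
The plan is to bypass the naive $O(\log n)$ cost of running Procedure {\tt Explo} of Fact~\ref{fact} on the whole tree $T$ by instead running an {\tt Explo}-like procedure on the \emph{contracted tree} $\widetilde T$, obtained from $T$ by dissolving all internal nodes of degree exactly $2$ and keeping only the leaves and the branch nodes. Since $\widetilde T$ has at most $2\ell-2$ vertices, {\tt Explo} on $\widetilde T$ needs only $O(\log \ell)$ bits. The agents physically live in $T$, but they can simulate a single step of {\tt Explo} on $\widetilde T$ without any additional counter: on a degree-$2$ node they enter by port $i$ and leave by port $1-i$ (walking straight through), and only when they reach a node of degree $1$ or $\ge 3$ do they perform an actual state transition of the simulated {\tt Explo} on $\widetilde T$.

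By Fact~\ref{fact} applied to $\widetilde T$, three cases arise, and the algorithm branches on which case holds. First, if $\widetilde T$ has a central node $c$, then $c$ is a distinguished vertex of $T$, and both agents navigate to it using the common number of basic-walk steps returned by {\tt Explo}, meeting there. Second, if $\widetilde T$ has a central edge but is asymmetric, {\tt Explo} singles out a canonical endpoint of that edge (again a vertex of $T$) for both agents, and they meet there. The delicate case is the third one, when $\widetilde T$ is symmetric: then the central edge of $\widetilde T$ corresponds to a long path $P$ of $T$ of some length $L \leq n$, and the true central node or central edge of $T$ is hidden inside $P$.

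To handle this third case inside the budget $O(\log\log n)$, I would use the standing assumption that the initial positions are not perfectly symmetrizable in $T$, although $\widetilde T$ does admit a symmetry swapping the two sides of $P$. Hence either the port labeling on $P$ itself fails to be invariant under this symmetry, or some branch of $T$ attached to $\widetilde T$ breaks it. In either situation, the rule "farthest extremity of the central edge" from Fact~\ref{fact} picks out a common endpoint $x$ of $P$ on which both agents agree. From $x$ they enter $P$ and run a bounded-phase probing procedure whose only persistent state between phases is a phase index $k \in \{0,1,\dots,\lceil \log n\rceil\}$, costing $O(\log\log n)$ bits. Each phase produces a deterministic time window in which the two agents are guaranteed to be simultaneously present at a common node of $P$, and the non-perfect-symmetrizability of the initial configuration guarantees that at least one such phase fires.

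The main obstacle I anticipate lies in this third case: designing the bounded-phase probing so that (a) each phase uses no persistent memory beyond the $O(\log\log n)$-bit phase index and a small amount of locally reusable state, (b) the two agents independently produce the \emph{same} phase index in the \emph{same} round, and (c) the non-perfect-symmetrizability of the initial positions combined with the symmetry of $\widetilde T$ provably forces a coincidence of positions in some phase $k \leq \lceil \log n \rceil$. The first two cases are essentially immediate consequences of Fact~\ref{fact}, and the final step of the argument is to verify the case split, check that simultaneous start indeed synchronises the two {\tt Explo} simulations round-for-round, and add the memory costs of the two sub-procedures to obtain the total $O(\log \ell + \log\log n)$ bits announced in Theorem~\ref{theo:upperbound}.
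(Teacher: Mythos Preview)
Your reduction to the contracted tree $\widetilde T$ and the handling of the first two cases are correct and match the paper. The gap is in the third case, and it is a real one.

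When $\widetilde T$ is symmetric, the rule ``farthest extremity of the central edge'' from Fact~\ref{fact} does \emph{not} give the two agents a common endpoint. Each agent computes this extremity from its own starting vertex $\widehat v$ in $\widetilde T$; if $\widehat v$ and $\widehat v'$ happen to be swapped by the port-preserving automorphism of $\widetilde T$ (which is entirely possible even when the initial positions are not perfectly symmetrizable in $T$), the two agents land on \emph{opposite} extremities of the central path. The symmetry-breaking information you invoke --- port labels along $P$, or differing branch lengths in $T$ --- is invisible to {\tt Explo} run on $\widetilde T$, so it cannot influence which extremity is chosen. Your sentence ``the rule\ldots picks out a common endpoint $x$ of $P$ on which both agents agree'' is therefore false, and with it the rest of the plan collapses: if they really agreed on $x$ there would be nothing left to do.

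A second, related problem: the simulations of {\tt Explo} on $\widetilde T$ are \emph{not} synchronised round-for-round. One step in $\widetilde T$ costs a variable number of rounds in $T$ (the length of the corresponding degree-$2$ path), so after Stage~1 the two agents have consumed different amounts of real time. Your item~(b), ``the two agents independently produce the same phase index in the same round'', cannot be taken for granted. The paper repairs this with an explicit resynchronisation pass ({\tt Synchro}) that equalises the work done up to a controlled offset $\beta$.

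What the paper actually does in the symmetric case is: (i) accept that the agents sit at the two different extremities, (ii) build a long virtual line out of basic walks and the central path, (iii) run a line-rendezvous protocol ({\tt prime}) based on successive prime slowdowns, costing $O(\log\log n)$ bits, and (iv) wrap this in a loop over offsets $\BW(j)/\CBW(j)$ whose purpose is precisely to \emph{create} a nonzero delay between the two agents at the start of some {\tt prime} call --- with a separate lemma showing that if the delay were zero for every $j$, the initial positions would have been perfectly symmetrizable. Your ``bounded-phase probing procedure'' gestures at something like (iii), but the essential mechanisms (ii) and (iv), and the desynchronisation argument behind them, are missing.
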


The rest of the section is dedicated to the proof of Theorem~\ref{theo:upperbound}. Let $T$ be any tree, and let $\mathbf{v}$ and $\mathbf{v'}$ be the initial positions of the two agents in $T$. Let $T'$ be the \emph{contraction} of $T$, that is the tree obtained from $T$ by replacing every path\footnote{Here, by \emph{path} we mean a sequence of adjacent nodes of degree~2, all pairwise distinct.} in $T$ joining two nodes of degree different from~2 by an edge (the ports of this edge correspond to the ports at both extremities of the contracted path). Notice that if $T$ has $\ell$ leaves, then its contraction $T'$ has at most $2\ell-1$ nodes. 

Our rendezvous algorithm uses Procedure  {\tt Explo}, defined in Section~\ref{sec:preliminaries}, as a subroutine. More precisely, each of the two agents  executes procedure {\tt Explo}  in $T$, ignoring the degree-2 nodes. That is, protocol {\tt Explo} is modified so that whenever an agent enters a degree-2 node through port $i\in\{0,1\}$ in some state $s$, it will leave that node in the next round by port $(i+1) \bmod 2$, in the same state $s$. In fact, the are some subtle additional details in the modified version of {\tt Explo}, when the initial node is of degree different from~2. Specifically, let $s_0$ be the initial state of an agent executing {\tt Explo}.  Our modified agent starts in an additional state $s^*_0$. If the initial node $\mathbf{v}$ has a degree different from~2, then it enters state $s_0$ and starts {\tt Explo}$(\mathbf{v})$, ignoring the degree-2 nodes. Otherwise, the agent remains in state $s^*_0$ and leaves the initial node through port~0. The agent then performs a basic walk, remaining in state $s^*_0$, until it enters a node of degree $1$ (i.e., a leaf of the tree $T$). At such a node, denoted  by $\mathbf{v}_{leaf}$, the agent enters state $s_0$ and starts {\tt Explo}$(\mathbf{v}_{leaf})$, ignoring the degree-2 nodes. We call {\tt Explo-bis} the procedure   {\tt Explo}  modified in this way. Observe that, in trees with no nodes of degree~2, the two protocols  {\tt Explo}  and {\tt Explo-bis} are executed identically. Hence, protocols  {\tt Explo}  and  {\tt Explo-bis} are executed identically in $T'$. Formally, for an initial position $\mathbf{v}$, let  us define 
\[\mathbf{\widehat{v}} = \left \{ \begin{array}{ll}
\mathbf{v} & \mbox{if $\deg(v)\neq 2$}\\
\mathbf{v}_{leaf} & \mbox{otherwise}
\end{array}\right.\]
Then, the following holds.

\begin{claim}\label{claim:identique} 
Once an agent  starting from some node $\mathbf{v}$ has reached node $\mathbf{\widehat{v}}$, the states at nodes of degrees different from~$2$ of the agent performing {\tt Explo-bis} in $T$ are identical to the states of an agent  performing {\tt Explo} in $T'$ starting from node $\mathbf{\widehat{v}}$.
\end{claim}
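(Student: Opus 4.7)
The plan is to proceed by induction on the number $k$ of steps at which the {\tt Explo-bis} agent in $T$ is located at a node of degree different from~$2$, counted from the moment it reaches $\mathbf{\widehat{v}}$. The invariant to maintain is that after the $k$-th such step, the {\tt Explo-bis} agent in $T$ and the {\tt Explo} agent in $T'$ are in the same state and located at the same node of $T'$ (i.e., the same non-degree-2 node of $T$), having just read the same input symbol.

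For the base case, observe that upon reaching $\mathbf{\widehat{v}}$, the {\tt Explo-bis} agent enters state $s_0$ by construction. The node $\mathbf{\widehat{v}}$ has degree different from~$2$: it is either $\mathbf{v}$ itself, when $\deg(\mathbf{v})\neq 2$, or the leaf $\mathbf{v}_{leaf}$, otherwise. Hence $\mathbf{\widehat{v}}$ is also a node of $T'$, and both agents start in state $s_0$ at this same node. For the inductive step, assume both agents are in state $s$ at the same non-degree-2 node $u$. The output $\lambda(s)$ determines the next action identically for both. If $\lambda(s)=-1$, both agents make a null move, read input $(-1,\deg(u))$, where $\deg_T(u)=\deg_{T'}(u)$ since each neighbor of $u$ in $T$ corresponds to exactly one neighbor of $u$ in $T'$, and transition to the same new state. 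If $\lambda(s)\geq 0$, both leave $u$ by port $p=\lambda(s) \bmod \deg(u)$. In $T'$, the {\tt Explo} agent traverses the corresponding edge $e'$ and enters its other endpoint $v$ through some port $q$. By definition of the contraction, $e'$ corresponds to a (possibly trivial) path of degree-$2$ nodes in $T$ joining $u$ and $v$, with the port at $v$ in $T'$ equal to the port at $v$ in $T$ at the end of this path. In $T$, the {\tt Explo-bis} agent leaves $u$ by the same port $p$, and by the modified rule at degree-$2$ nodes (entering through port $i\in\{0,1\}$ and leaving through port $(i+1)\bmod 2$ in the same state), it walks through the entire path without altering its state, eventually entering $v$ through port $q$. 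Since $\deg_T(v)=\deg_{T'}(v)$, both agents read the same input $(q,\deg(v))$ at $v$ and transition to the same new state, closing the induction.

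The only delicate point, already implicit above, is the correspondence between the port labels at the endpoints of the contracted edge $e'$ in $T'$ and those at the extremities of the associated path in $T$, which is built into the definition of the contraction. Once this is noted, the pass-through behavior of {\tt Explo-bis} at degree-$2$ nodes ensures that the agent in $T$ traverses the contracted path in lock-step with the agent in $T'$ crossing $e'$, making the rest of the argument a straightforward unpacking of the transition functions.
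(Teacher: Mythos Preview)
Your proof is correct. The paper itself does not give a proof of this claim: it is stated as an immediate consequence of the way {\tt Explo-bis} is defined, and the text moves on directly with ``Using this claim, rendezvous in $T$ is achieved as follows.'' Your inductive argument is precisely the natural way to make this rigorous, matching the informal intent of the construction (states are frozen along degree-$2$ paths, ports at the endpoints of a contracted edge agree with the ports at the extremities of the corresponding path in $T$, and degrees at non-degree-$2$ nodes are preserved under contraction).

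One minor remark: the clause ``having just read the same input symbol'' in your invariant is not quite right at the base step, since the {\tt Explo} agent in $T'$ is in its initial state $s_0$ without having read any input, whereas the {\tt Explo-bis} agent may have arrived at $\mathbf{\widehat{v}}=\mathbf{v}_{leaf}$ through a port. This does no harm, because the next action $\lambda(s_0)$ depends only on the current state, not on the last input; you could simply drop that clause from the invariant and the induction still closes.
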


Using this claim, rendezvous in $T$ is achieved as follows. 

\paragraph{Stage 1.} Each of the two agents  executes procedure {\tt Explo-bis} from their respective initial positions $\mathbf{v}$ and $\mathbf{v'}$. 

After having completed {\tt Explo-bis}, each agent knows whether the contraction tree $T'$ is symmetric or not. (It is non-symmetric if either there is a central node, or there is a central edge and the two port-labeled trees obtained by removing the central edge in $T'$ are not isomorphic --- the isomorphism must preserve both the structure of the trees, and the port labelings). 

\paragraph{Stage 2.} The nature of the second stage differs according to whether $T'$ is symmetric or not.

In the non symmetric case, the rendezvous protocol uses Fact~\ref{fact}, which states that the two agents performing Procedure {\tt Explo} will eventually identify a single node $x$ of $T'$. Node $x$ is identified by the number of steps of the basic walk performed in $T'$ to reach that node from the initial position. Notice that, although {\tt Explo} ensures (by Fact~\ref{fact}) that each agent returns to its initial position $\mathbf{v}$ after completing the procedure, Claim~\ref{claim:identique} guaranties only that the agent applying {\tt Explo-bis}  returns to a node $\mathbf{\widehat{v}}$. Nevertheless, this is sufficient, since the length of the basic walk reaching $x$ is the length of the one starting from node $\mathbf{\widehat{v}}$, ignoring degree-2 nodes. Note that this length does not exceed twice the number of edges of $T'$, and thus it can be encoded on $O(\log \ell)$ bits. 

Therefore, each of the agents act as follows: 

\begin{itemize}
\item If there is a central node $x$ in $T'$, then Rendezvous is achieved by waiting for the other agent at that node. 
\item Similarly, if there is a central edge in $T'$, and the tree $T'$ is not symmetric, then let $x$ be the extremity of the central edge of $T'$ identified by protocol {\tt Explo-bis}; rendezvous is achieved by waiting for the other agent at that node. 
\end{itemize}

The difficult and more challenging situation is when the contraction tree $T'$ has a central edge with two non distinguishable extremities, in which case the ability to solve the rendezvous problem depends on the large tree $T$ and on the initial positions of the two agents in $T$. Achieving rendezvous is complicated by the constraint that the agents must use sub-logarithmic memory when $\ell$ is small. The main part of the proof will be dedicated to describing how this task can actually be achieved in a memory efficient manner. 

\paragraph{Sub-stage 2.1.} (for the case when $T'$ symmetric) Resynchronization.

Recall that we are in a situation where each of the two agents has performed  {\tt Explo-bis}. An agent starting from node $\mathbf{v}\in T$ has not necessarily returned to node $\mathbf{v}$, but to node $\mathbf{\widehat{v}}\in T'$.  Each agent executes Procedure {\tt Synchro} defined as follows. It starts the execution of a basic walk in $T$, leaving the current node $\mathbf{\widehat{v}}$ by port~0. This basic walk will end when the agent is back at node $\mathbf{\widehat{v}}$. This is simply insured by counting the number of edge-traversals in $T'$: the agent stops the basic walk after $2(\nu-1)$ edge-traversals in $T'$, where $\nu$ denotes the number of nodes in $T'$. Since $\nu\leq 2\ell-1$, counting up to $O(\nu)$ does not require more that $O(\log\ell)$ bits. The basic walk proceeds with the following insertions: at each visited node $w$ with degree different from~2 (i.e., at each node of $T'$), the agent performs  {\tt Explo-bis}$(w)$, except for the very last node of $T'$ visited by the basic walk, that is except when the agent returns, for the last time, at its initial position $\mathbf{\widehat{v}}$.

Since agents performing Procedure {\tt Synchro} starting from different initial positions $\mathbf{\widehat{v}}$ execute identical actions, only in different order, we have the following: 

\begin{claim}\label{claim:delaybeta}
Two agents starting simultaneously at arbitrary initial positions $\mathbf{v}$ and $\mathbf{v'}$ in $T$ finish Procedure {\tt Synchro} with a delay $\beta=|L-L'|$ where $L$ (resp., $L'$) is the length of the basic walk in $T$ leading from $\mathbf{v}$ to $\mathbf{\widehat{v}}$ (resp.,  from $\mathbf{v'}$ to $\mathbf{\widehat{v}'}$).
\end{claim}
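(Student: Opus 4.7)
My plan is to write down, for each agent, the total number of elapsed rounds (starting from the common round~$0$) at which it finishes Procedure {\tt Synchro}, and to observe that every term in this total that is independent of the agent's starting position appears with the same value for both agents.

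First, I would introduce some notation: for every node $w\in T$ of degree~$\neq 2$, let $a(w)$ denote the number of rounds required by one execution of {\tt Explo-bis}$(w)$; by Claim~\ref{claim:identique}, $a(w)$ depends only on $w$ and $T$. The agent starting at $\mathbf{v}$ finishes Stage~1 after $L+a(\mathbf{\widehat{v}})$ rounds: $L$ rounds to walk from $\mathbf{v}$ to $\mathbf{\widehat{v}}$, and $a(\mathbf{\widehat{v}})$ more to complete its initial {\tt Explo-bis}, at the end of which it stands at $\mathbf{\widehat{v}}$. Stage~2 then consists of a basic walk of $2(\nu-1)$ $T'$-edge traversals with an insertion of {\tt Explo-bis}$(w)$ at every entry into a $T'$-node $w$, with the sole exception of the final return to $\mathbf{\widehat{v}}$. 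Since each edge of $T$ lies on exactly one contracted path of $T'$, the basic walk traverses each edge of $T$ exactly twice, so it lasts $2(n-1)$ rounds. A standard tree-walk argument shows that, regardless of the starting node, each $T'$-node $w$ is entered exactly $\deg_{T'}(w)$ times during the basic walk.

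Putting these ingredients together, the total insertion cost in Stage~2 is
\[
\sum_{w\in T'}\deg_{T'}(w)\cdot a(w)\;-\;a(\mathbf{\widehat{v}}),
\]
where the subtracted term removes the omitted final insertion at $\mathbf{\widehat{v}}$, and the agent finishes Stage~2 after a total of
\[
\tau\;=\;L+a(\mathbf{\widehat{v}})+2(n-1)+\sum_{w\in T'}\deg_{T'}(w)\cdot a(w)-a(\mathbf{\widehat{v}})\;=\;L+2(n-1)+\sum_{w\in T'}\deg_{T'}(w)\cdot a(w)
\]
rounds. Running the same computation for the second agent gives the identical expression with $L'$ in place of $L$, so the two finishing rounds differ by precisely $|L-L'|=\beta$.

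The key step that deserves care is the cancellation of the two occurrences of $a(\mathbf{\widehat{v}})$: the term $+a(\mathbf{\widehat{v}})$ contributed by the Stage~1 {\tt Explo-bis} is exactly matched by the term $-a(\mathbf{\widehat{v}})$ from skipping the final insertion of Stage~2. This is the reason {\tt Synchro} is defined to omit only the very last visit to $\mathbf{\widehat{v}}$: it is precisely what makes the position-dependent contribution of the initial {\tt Explo-bis} vanish, leaving only the Stage~1 prefix walk lengths $L$ and $L'$ to produce a delay between the two agents.
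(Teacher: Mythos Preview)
Your argument is correct and is essentially a fleshed-out version of the paper's own justification: the paper simply remarks that ``agents performing Procedure {\tt Synchro} starting from different initial positions $\mathbf{\widehat{v}}$ execute identical actions, only in different order,'' whereas you make this explicit by computing the total elapsed time and exhibiting the cancellation of the two $a(\mathbf{\widehat{v}})$ terms. Your observation that the Stage~1 call to {\tt Explo-bis}$(\mathbf{\widehat{v}})$ exactly compensates for the omitted final insertion in {\tt Synchro} is precisely the mechanism behind the paper's ``same multiset of actions'' claim, so the two arguments coincide.
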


Once the agents are resynchronized (their desynchronization is now precisely  $\beta$), each of them proceeds to the second part of Stage~2. 

\paragraph{Sub-stage 2.2.} (for the case when $T'$ symmetric) Rendezvous in a virtual line.

After the execution of Procedure {\tt Synchro}, the agent with initial position $\mathbf{v}$ is back at $\mathbf{\widehat{v}}$. In view of Fact~\ref{fact}, since it has applied {\tt Explo}$(\mathbf{\widehat{v}})$ at the very beginning of the rendezvous protocol, the agent knows the number of steps of the basic walk from $\mathbf{\widehat{v}}$ to the farthest extremity of the central edge of $T'$. So, its first action in Sub-stage 2.2 is to go to this node, following a basic walk. We denote by $\mathbf{\widehat{v}}_{far}$ (resp., $\mathbf{\widehat{v}'}_{far}$) the farthest extremity of the central edge of $T'$ reached by the agent starting from $\mathbf{v}$ (resp., from $\mathbf{v'}$). 

Since the contraction tree $T'$ is symmetric, the two agents may end up in two different nodes of $T$, i.e., possibly $\mathbf{\widehat{v}}_{far}\neq \mathbf{\widehat{v}'}_{far}$. For instance, in the $n$-node path with an odd number of edges, the two agents may end up in the two extremities of the path. Also, in the binomial tree with $n$-nodes (cf.\cite{CLR}), the two agents may end up in the two roots of the two binomial subtrees of $T$ with $n/2$ nodes. Still, we prove that rendezvous is possible with little memory assuming that the two initial positions of the agents were not perfectly symmetrizable in $T$. Actually, the first of the two key ingredients in our proof is showing how rendezvous can be achieved in the path (or line) using agents with $O(\log\log n)$ bits of memory. 

In the lemma below, we consider \emph{blind} agents in paths, that is agents that ignore port labels. More precisely, when entering a node, such an agent can just distinguish between the incoming edge and the other edge (if any). Let $P=(v_1,\dots,v_m)$ be an $m$-node path, and consider two identical blind agents initially located at nodes $v_a$ and $v_b$, $a<b$. Rendezvous using blind agents is possible if and only if $m$ is odd, or $m$ is even and $a-1\neq m-b$. Of course, a standard agent can simulate the behavior of a blind agent. When applying the lemma below with standard agents, we will make sure that the starting positions  $v_a$ and $v_b$ are such that rendezvous is achievable even with blind agents. 

\begin{lemma} \label{lem:rdvchemin}
There exists a pair of identical blind agents accomplishing rendezvous with simultaneous start in all paths, whenever it is possible, and using $O(\log\log m)$ bits of memory in paths with at most $m$ nodes.
\end{lemma}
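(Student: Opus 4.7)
My plan is to build a rendezvous protocol around a basic ``sweep'' that bounces off the two leaves of the path, combined with a ``waiting'' mechanism based on a counter whose value doubles across phases; the waiting scheme is what will break the parity-type symmetry obstructions inherent to blind walks on a line.

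First, I would observe that once a blind agent has taken one step away from its initial node, the path topology (with its two leaves) gives it a local sense of direction: at every subsequent non-leaf node it can distinguish the edge it came from from the other edge, so it can implement an infinite basic sweep in which it walks between the two leaves and bounces at each. Two agents running the basic sweep simultaneously from positions $v_a$ and $v_b$ meet in $O(m)$ rounds whenever $b - a$ has the ``right'' parity with respect to $m$; a direct computation on the folded (sawtooth) trajectory $p(t) = \mathrm{fold}_{2(m-1)}(\cdot + t)$ shows that the two trajectories coincide at some time $t$ iff a certain linear congruence mod $2(m-1)$ is solvable, which already handles all cases with $b - a$ even, and, depending on the parity of $m$, several of the cases with $b - a$ odd. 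The pathological remaining cases are precisely the ones where the trajectories have locked parities on the folded line.

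Second, to unlock those parities, the protocol inserts one-round \emph{pauses} at a rate that varies across phases $i = 0, 1, 2, \ldots$. In phase $i$, the agent sweeps as before, but after every $2^i$ leaf-bounces it pauses for one round before resuming. Because $v_a$ and $v_b$ give rise to leaf-bounces at different absolute rounds, the pauses inserted by the two agents occur at different global times, creating a cumulative relative shift of one round every $O(2^i)$ bounces. Running through $i = 0, 1, \ldots, \lceil \log m \rceil$ drives the relative offset of the two folded trajectories through every residue modulo $2(m-1)$, and the same meeting-time analysis as in the basic sweep then forces a collision at some node, whenever collision is feasible (i.e. whenever the initial positions are not perfectly symmetrizable).

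Third, I would bound the memory. The phase index $i$ is stored explicitly and takes $O(\log \log m)$ bits since $i \leq \lceil \log m \rceil$. Inside a phase the agent keeps only (i)~a bounce counter reduced modulo $2^i$, storable in $i = O(\log \log m)$ bits, together with (ii)~a constant number of flags recording the current sweep direction and whether it is in the middle of a pause. Crucially, no counter for the number of rounds or for the path length is needed, because the leaves themselves play the role of landmarks that delimit sweeps and trigger the pause mechanism. The total memory is therefore $O(\log \log m)$ bits.

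The hardest part, I expect, will be the correctness analysis: showing that the doubling pause schedule really does achieve rendezvous in every case where rendezvous is feasible, and doing so within a time budget that fits a polynomial in $m$. This calls for a case analysis on (a)~whether the two agents' first moves go in the same direction or in opposite directions (blind agents pick their initial edge deterministically, but the adversarial port labeling may realize either option), (b)~the parities of $a$, $b$ and $m$, and (c)~the feasibility condition of the lemma. In each case one tracks both agents' positions modulo the bouncing period $2(m-1)$ and checks that the $O(\log m)$ pauses spread across the phases drive the agents' relative offset through the critical residues, so that a meeting at a node is eventually unavoidable.
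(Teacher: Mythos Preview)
There is a concrete gap in your memory analysis. You claim the bounce counter modulo $2^i$ is ``storable in $i = O(\log\log m)$ bits,'' but since your phase index $i$ ranges up to $\lceil\log m\rceil$, this counter needs up to $\log m$ bits. Storing $i$ itself takes $O(\log\log m)$ bits, as you say, but the pause-trigger counter modulo $2^i$ is exponentially larger in the late phases. This is not a slip that can be patched locally: the doubling schedule $2^0,2^1,\ldots,2^{\lceil\log m\rceil}$ intrinsically forces a $\Theta(\log m)$-bit counter, which defeats the whole point of the lemma. Your correctness sketch has a related problem: both agents pause at the same rate in each phase and their bounce counts differ by at most one at any time (the bounce period is $m-1$ for both), so over a full phase they insert the same number of pauses and the net relative shift is zero; even granting one unit of transient shift per phase, $O(\log m)$ phases cannot ``drive the offset through every residue modulo $2(m-1)$.'' How a phase terminates is also left unspecified, and any counter long enough to measure a useful phase length would again cost $\Omega(\log m)$ bits.

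The paper's proof takes a different route that avoids both issues. After first walking to an extremity, the agent performs two full traversals of the path at speed $1/p$ for successive \emph{primes} $p=2,3,5,\ldots$; implementing speed $1/p$ requires only a counter modulo $p$, and the key point is that $p$ itself never exceeds $O(\log m)$, so this counter fits in $O(\log\log m)$ bits. Correctness comes from a divisibility argument rather than from exhausting residues: if the agents fail to meet during the phase governed by prime $p_j$, one shows that $p_j$ divides $|a-b|$ or $|m-(a+b)+1|$, and by coprimality of distinct primes $\prod_j p_j$ divides $|a-b|\cdot|m-(a+b)+1|\le m^2$, which bounds the number (and hence the size) of primes needed. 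Powers of two admit no analogue of this coprimality step, so even with the memory issue repaired your scheme would still need a different correctness argument.
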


\begin{proof}
Let $P=(v_1,\dots,v_m)$ be an $m$-node path, and consider two identical blind agents initially located at nodes $v_a$ and $v_b$, $a<b$. To achieve rendezvous, the two agents perform a sequence of traversals of $P$, executed at lower and lower speeds, aiming at eventually meeting each other at some node. More precisely, for an integer $s\geq 1$, a traversal of the path is performed at \emph{speed} $1/s$,  if the agent remains idle $s-1$ rounds before traversing any edge. For instance, traversing $P$ from $v_1$ to $v_m$ at speed $1/s$ requires $(m-1)s$ rounds. Our rendezvous algorithm for the line, called \prem, performs as follows. 
  
\begin{center}
\fbox{
\begin{minipage}{10cm}
\textbf{Begin}\\
\hspace*{3ex} start in arbitrary direction; \\
\hspace*{3ex} move at speed 1 until reaching one extremity of the path; \\
\hspace*{3ex} $p\leftarrow 2$;\\
\hspace*{3ex} {\bf While} no rendezvous {\bf do} \\
\hspace*{6ex} traverse the entire path twice, at speed $1/p$; \\
\hspace*{6ex} $p\leftarrow$ smallest prime larger than $p$; \\
\textbf{End}
\end{minipage}
}
\end{center}

We now prove that, whenever rendezvous is possible for blind agents (i.e., when $m$ odd, or $m$ even and $a-1\neq m-b$), the two agents meet before the $p$th iteration of the loop, for $p=O(\log n)$. Let $p_j$ be the $j$th prime number ($p_1=2$). Hence the speed of each agent at the $j$th execution of the loop is $1/p_j$. If rendezvous has not occurred during the $j$th execution of the  loop, then the two agents have crossed the same edge, say  $e=\{v_c,v_{c+1}\}$, at the same time $t$, in opposite directions. This can occur if, for instance, the agent initially at $v_a$ moves to node $v_1$, traverses twice the path at successive speeds $p_1,\dots,p_{j-1}$, and,  $c \, p_j $ rounds after having eventually started walking at speed $p_j$, traverses the edge $e$ at time $t$,   while the other agent initially at $v_b$ moves to $v_m$, traverses twice the path at successive speeds $p_1,\dots,p_{j-1}$, and,  $(m-c) p_j $ rounds after having eventually started walking at speed $p_j$, traverses the same edge $e$ in the other direction at the same time $t$. In fact, there are four cases to consider, depending on the two starting directions of the two agents: towards $v_1$ or towards $v_m$. From these four cases, we get that one of the following four equalities must hold (the first one corresponds to the previously described scenario: $v_a$ moves towards $v_1$ while $v_b$ moves towards $v_m$):
\begin{itemize}
\item $t = (a-1) + 2(m-1) \sum_{i=1}^{j-1}p_i +  c \, p_j  = (m-b) + 2 (m-1) \sum_{i=1}^{j-1}p_i  + (m-c) p_j$
\item $t = (a-1) + 2(m-1) \sum_{i=1}^{j-1}p_i +  (m-1) p_j  + (m-c)p_j = (b-1) + 2 (m-1) \sum_{i=1}^{j-1}p_i  + c\, p_j$
\item $t = (m-a) + 2(m-1) \sum_{i=1}^{j-1}p_i +  (m-c) p_j  = (m-b) + 2 (m-1) \sum_{i=1}^{j-1}p_i  + (m-1) p_j + c\, p_j$
\item $t = (m-a) + 2(m-1) \sum_{i=1}^{j-1}p_i +  (m-c) p_j  = (b-1) + 2 (m-1) \sum_{i=1}^{j-1}p_i  + c\, p_j$
\end{itemize}
Therefore we get that 
$p_j \; \mbox{divides} \; |a-b|, \;\mbox{or} \;  p_j \; \mbox{divides} \; |m-(a+b)+1|$.
As a consequence, since the $p_i$'s are primes, we get that if the two agents have not met after the $j$th execution of the loop, then 
\[ \prod_{i \in {\cal I}} p_i \; \mbox{divides} \; |a-b| \;\;\;\mbox{and} \;\;\;  \prod_{i \in {\cal J}} p_i \; \mbox{divides} \; |m-(a+b)+1|\]
where ${\cal I} \cup  {\cal J} = \{1,\dots,j\}$. Therefore, since the $p_i$'s are primes, $\prod_{i=1}^{j} p_i  \; \mbox{divides} \; |a-b|\cdot |m-(a+b)+1|.$ Hence, if rendezvous is feasible, it must occur at or before the $j$th execution of the loop, where $j$ is the largest index such that  $\prod_{i=1}^{j} p_i  \; \mbox{divides} \; |a-b|\cdot |m-(a+b)+1|$. Thus it must occur at or before the $j$th execution of the loop, where $j$ is the largest index such that  $\prod_{i=1}^{j} p_i \leq m^2$. 

Let $\pi(x)$ be the number of prime numbers smaller than or equal to $x$. On the one hand, we have $\prod_{i=1}^{j} p_i \geq 2^{\pi(p_j)}$. Hence, rendezvous must occur at or before the $j$th execution of the loop, where $j$ is the largest index such that $2^{\pi(p_j)}\leq m^2$, i.e., $\pi(p_j)\leq 2 \log m$. On the other hand, from the Prime Number Theorem we get  that $\pi(x)\sim x/\ln(x)$, i.e., $\lim_{x\rightarrow \infty} \frac{\pi(x)}{x/\ln(x)}=1$. Hence, for $m$ large enough, $\pi(x) \geq  x/(2\ln(x))$. Thus rendezvous must occur at or before the $j$th execution of the loop, where $j$ is the largest index such that $p_j/\ln p_j \leq 4\log m$. 

From the above, we get that (1) rendezvous must occur whenever it is feasible, and (2) it occurs at or  before the $j$th execution of the loop, where $\log p_j \leq O(\log\log m)$. Since the next prime $p$ can be found using $O(\log p)$ bits, e.g., by exhaustive search, we get that \prem\/ performs rendezvous using agents with $O(\log\log m)$ bits of memory. 
\end{proof}

The (blind) agents described in Lemma~\ref{lem:rdvchemin} perform a protocol called \prem. This protocol uses the infinite sequence of prime numbers. We denote by $\prem(i)$ the protocol \prem\/ modified so that it stops after having considered the $i$th prime number. 

We now come back to our general rendezvous protocol in trees (with port numbers). Let $\nu=2x$ be the number of nodes in the contraction tree $T'$. (We have $\nu$ even, since $T'$ is symmetric with respect to its central edge). We define a (non-simple) path called the \emph{rendezvous path}, denoted by $P$, that will be used by the agents to rendezvous using protocol \prem. To define $P$, let $u$ and $v$ be the two extremities of the path in $T$ corresponding to the central edge in $T'$. We have $\{\mathbf{\widehat{v}}_{far},\mathbf{\widehat{v}'}_{far}\}\subseteq\{u,v\}$. The path $P$ is called the \emph{central path}, and is denoted by $C$. Abusing notation, $C$ will also be used as a shortcut for the instruction: ``traverse $C$". 

Let \BW\/ (for ``basic walk") be the instruction of performing the following actions: leave by port $0$, and, perpetually, whenever entering a degree-$d$ node by port $i\in\{0,\dots,d-1\}$, leave that node by port $(i+1) \bmod d$. Similarly, let \CBW\/ (for ``counter basic walk"), be the instruction of performing the following: leave by the port used to enter the current node at the previous step, and, perpetually, whenever entering a degree-$d$ node by port $i$, leave that node by port $(i-1)\bmod d$. For $j\geq 1$, let $\BW(j)$ (resp., $\CBW(j)$) be the instruction to execute \BW\/ (resp., \CBW) until $j$ nodes of degree different from~2 have been visited. Let $B_u$ (resp., $B_v$) be the path corresponding to the execution of $\BW\big (2(\nu-1) \big)$ from $u$ (resp., from $v$). Note that a node can be visited several times by the walk, and thus neither $B_u$ nor $B_v$ are simple. Note also that since $T'$ has $\nu$ nodes, it has $\nu-1$ edges, and thus both $B_u$ and $B_v$ are closed paths, i.e., their extremities are $u$ and $v$, respectively. Let $\overline{B}_u$ (resp., $\overline{B}_v$) be the path corresponding to the execution of $\CBW\big(2(\nu-1)\big)$ from $u$ (resp., from $v$). We define 
\[P=(B_u\wedge C_{u\to v} \wedge \overline{B}_v \wedge C_{v\to u})^{5\ell} \wedge (B_u \wedge C_{u\to v} \wedge \overline{B}_v)\]
where ``$\wedge$" denotes the concatenation of paths, $C_{u\to v}$ (resp., $C_{v\to u}$) denotes the path $C$ traversed from $u$ to $v$ (resp., from $v$ to $u$), and, for a closed path $Q$, $Q^\alpha$ denotes $Q$ concatenated with itself $\alpha$ times. 

The path $P$ is well defined. Indeed, the sequence $B_u\wedge C_{u\to v} \wedge \overline{B}_v \wedge C_{v\to u}$ leads back to node $u$. Also, the two extremities of the path are $u$ and $v$. Now, the agents have no clue whether they are standing at $u$ or at $v$. Nevertheless, we have the following. 

\begin{claim}
Starting from an extremity $u$ or $v$ of the central path $C$, an agent performing the sequence of instructions 
$$\Big(\BW\big(2(\nu-1)\big),C,\CBW\big(2(\nu-1)\big),C\Big)^{5\ell}, \BW\big(2(\nu-1)\big), C, \CBW\big(2(\nu-1)\big)$$
traverses the path $P$ from one of its extremities to the other. 
\end{claim}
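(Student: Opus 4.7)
The plan is to prove the claim by direct unrolling of the sequence of instructions, in two symmetric cases depending on whether the agent starts at $u$ or at $v$ (they are interchanged by swapping the roles of $u$ and $v$).

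First I would establish the auxiliary fact that $\BW(2(\nu-1))$ and $\CBW(2(\nu-1))$ started at any node $w\in\{u,v\}$ of degree different from $2$ are closed walks that return to $w$. The reason is that visiting $2(\nu-1)$ degree-different-from-$2$ nodes of $T$ corresponds exactly to $2(\nu-1)$ edge traversals in the contracted tree $T'$, which has $\nu-1$ edges; a basic walk of length $2(m-1)$ in an $m$-node tree is known to be a closed Euler-style tour, and the same argument applies to the counter basic walk since the rule $i\mapsto(i-1)\bmod d$ still traverses each edge of $T'$ exactly twice. This justifies that $B_u, B_v, \overline{B}_u, \overline{B}_v$ are closed walks based respectively at $u, v, u, v$, as used in the definition of $P$.

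Next I would unroll the outer loop starting at $u$. A single pass $(\BW, C, \CBW, C)$ produces $B_u$ (returning to $u$), then $C_{u\to v}$ (arriving at $v$), then $\overline{B}_v$ (returning to $v$; the initial port used by this $\CBW$ is the port by which the agent just entered $v$ through the central edge, which is exactly the definition of $\overline{B}_v$), then $C_{v\to u}$ (back at $u$). An induction on the $5\ell$ iterations gives $(B_u\wedge C_{u\to v}\wedge \overline{B}_v\wedge C_{v\to u})^{5\ell}$ and leaves the agent at $u$; the three trailing instructions then produce $B_u\wedge C_{u\to v}\wedge\overline{B}_v$ and leave the agent at $v$. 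Concatenating yields exactly $P$, traversed from extremity $u$ to extremity $v$. The case of starting at $v$ is verbatim with the roles of $u$ and $v$ swapped, yielding the mirror traversal of $P$ from $v$ to $u$.

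The main obstacle I anticipate is the bookkeeping for the first move of each $\CBW$ invocation: because its first step is ``leave by the port used to enter the current node at the previous step'', I must verify at every occurrence that the preceding instruction was indeed the traversal $C$ of the central path, so that the execution of $\CBW$ matches the definition of $\overline{B}_u$ or $\overline{B}_v$. The alternation $\BW, C, \CBW, C$ in the sequence is designed precisely to ensure this; formalizing it cleanly calls for a simple inductive invariant on the current node and the last-entered port of the agent at the boundaries between consecutive blocks, but no deeper difficulty is involved.
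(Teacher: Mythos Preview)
Your approach is correct and essentially the same as the paper's. The paper handles the start-at-$u$ case exactly as you do (direct unrolling), and for the start-at-$v$ case it makes explicit the one step you leave implicit: it computes the reverse path $\overline{P}$ and observes, via a simple regrouping of the concatenation, that
\[\overline{P}=(B_v\wedge C_{v\to u}\wedge \overline{B}_u\wedge C_{u\to v})^{5\ell}\wedge(B_v\wedge C_{v\to u}\wedge \overline{B}_u),\]
which is precisely the walk your ``swap the roles of $u$ and $v$'' argument produces. Your attention to the initial port of each $\CBW$ block is well placed and is exactly what makes the identifications $\overline{B}_u$, $\overline{B}_v$ legitimate; the paper does not spell this out.
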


Before establishing the claim, note that instructions $\BW\big(2(\nu-1)\big)$ and $\CBW\big(2(\nu-1)\big)$ are meaningful, since agents can have counters of size $O(\log \ell)$ bits, and they know $\nu$ in view of Fact~\ref{fact}. To establish the claim, it suffices to notice that the path $\overline{P}$ reverse to $P$ is given by
\[\overline{P}=(B_v \wedge C_{v\to u} \wedge \overline{B}_u) \wedge (C_{u\to v} \wedge B_v \wedge C_{v\to u} \wedge \overline{B}_u)^{5\ell}
= (B_v\wedge C_{v\to u} \wedge \overline{B}_u \wedge C_{u\to v})^{5\ell} \wedge (B_v \wedge C_{v\to u} \wedge \overline{B}_u).\]

The two agents will use protocol \prem\/ along the path $P$ to achieve rendezvous. However, to make sure that rendezvous succeeds, the two agents must not start \prem\/ simultaneously at the two extremities of $P$, in order to break symmetry. Unfortunately, this requirement is not trivial to satisfy. Indeed, one can guarantee  some upper bound on the delay between the times the two agents reach the two extremities of $C$ (and thus of $P$ as well)  that does not exceed $n$, but no guarantee can be given for the minimum delay, which could be zero. This is because the delay does not depend on the tree $T'$, but on the tree $T$. Hence two agents starting simultaneously in $T$ may actually finish Stage~2.1 of our protocol (i.e., the execution of {\tt Synchro}) at the same time, even if $T$ is not symmetric, and even if $T$  is symmetric but the starting positions were not perfectly symmetrizable. The second key ingredient in our proof is a technique guaranteeing  eventual desynchronization of the two agents. A high level description of this technique is summarized in Figure~\ref{fig:phase2}. We describe this technique in detail below. 

\begin{figure}[t]
\begin{center}
\fbox{
\begin{minipage}{11cm}
\textbf{Begin}\\
\hspace*{3ex} {\bf for consecutive values} $i\geq 1$ {\bf do}  \;\; \emph{/* outer loop */}\\
\hspace*{6ex} \emph{/* try rendezvous */}\\
\hspace*{6ex} {\bf for} $j=0,1,\dots,2(\nu-1)$ {\bf do} \;\; \emph{/* first inner loop */}\\
\hspace*{9ex} perform $\BW(j)$; \\
\hspace*{9ex} perform $\CBW(j)$;  \;\; \emph{/* back to the original position */}\\
\hspace*{9ex} perform  $\prem(i)$ on the rendezvous path $P$;\\
\hspace*{6ex} \emph{/* reset */}\\
\hspace*{6ex} go to the other extremity of the central path $C$;\\
\hspace*{6ex} {\bf for} $j=0,1,\dots,2(\nu-1)$ {\bf do}  \;\; \emph{/* second inner loop */}\\
\hspace*{9ex} perform $\BW(j)$;\\
\hspace*{9ex} perform $\CBW(j)$;  \;\; \emph{/* back to the original position */} \\
\hspace*{6ex} return to the original extremity of the central path $C$;\\
\textbf{End}
\end{minipage}
}
\end{center}
\vspace{-3ex}
\caption{Second phase of the rendezvous (performed when the contraction tree is symmetric).}
\label{fig:phase2}
\end{figure}

The outer loop of the protocol in Figure~\ref{fig:phase2} states how many consecutive prime numbers the protocol will test while performing \prem\/ along the path $P$. Performing \prem$(i)$ for successive values of $i$, instead of just \prem, is for avoiding a perpetual execution of \prem\/ in the case when the two agents started the execution of phase~2 at the same time from the two extremities of $P$. For every number $i\geq 1$ of primes to be used in \prem, the protocol performs two inner loops. The first one is an attempt to achieve rendezvous along $P$, while the second one is used to upper bound the delay between the two agents at the end of the outer loop, in order to guarantee that the next execution of the outer loop will start with a delay between the two agents that does not exceed~$n$. 

During the first inner loop, an agent executing the protocol performs a series of basic walks, of different lengths. For $j=0$, the agent performs nothing. In this case,  $\prem(i)$ is performed on $P$ directly. For $j>0$,  the agent performs a basic walk in $T$ to the $j$th node of degree different from~2 that it encounters along its walk. When $j=2(\nu-1)$, the basic walk is a complete one, traversing each edge of $T$ twice. Each $\BW(j)$ is followed by a $\CBW(j)$, so as to come back to the original position at the same extremity of the path $P$. Once this is done, the agent performs $\prem(i)$ on $P$. 

The second inner loop aims at resetting the two agents. For this purpose, each agent goes to the other extremity of $C$, performs the same sequence of actions as the other agent had performed during its execution of the first inner loop, and returns to its original extremity of $C$. This enables resetting the two agents in the following sense. 

\begin{claim}\label{claim:reschedule}
Let $t$ and $t'$ be the times of arrival of the two agents at $\mathbf{\widehat{v}}_{far}$ and $\mathbf{\widehat{v}'}_{far}$ after the execution of {\tt Synchro}, respectively. Then the difference between the times the two agents enter each execution of the outer loop of the protocol in Figure~\ref{fig:phase2} remains identical, equal to~$|t-t'|$. 
\end{claim}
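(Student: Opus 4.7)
The plan is to proceed by induction on the outer-loop index $i$. The base case $i=1$ holds by the very definition of $t$ and $t'$: the two agents enter the first iteration of the outer loop precisely upon arriving at $\mathbf{\widehat{v}}_{far}$ and $\mathbf{\widehat{v}'}_{far}$, so the gap is $|t-t'|$. For the inductive step it is enough to show that a single complete iteration of the outer loop takes exactly the same number of rounds for both agents, since this preserves the gap into the next iteration.

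To do so, fix an iteration $i$ and let $u,v$ be the two extremities of the central edge in $T'$, so that $\{\mathbf{\widehat{v}}_{far},\mathbf{\widehat{v}'}_{far}\}=\{u,v\}$. The two agents enter iteration $i$ at opposite extremities and each then executes the identical pseudocode of Figure~\ref{fig:phase2}. This code decomposes into four blocks: a \emph{first} inner loop at the agent's current extremity running $\BW(j), \CBW(j), \prem(i)$ for $j=0,\dots,2(\nu-1)$; a traversal of the central path $C$; a \emph{second} inner loop at the opposite extremity running only $\BW(j), \CBW(j)$ for the same range of $j$; and a return traversal of $C$. Writing $\tau_x(j)$ for the duration of $\BW(j)$ followed by $\CBW(j)$ starting at $x\in\{u,v\}$, and $\rho(i)$ for the duration of $\prem(i)$ on the rendezvous path $P$, the total duration of one outer-loop iteration, for either agent, is
\[
\sum_{j=0}^{2(\nu-1)}\bigl(\tau_u(j)+\tau_v(j)\bigr)\;+\;(2\nu-1)\,\rho(i)\;+\;2\,|C|.
\]
This quantity is manifestly symmetric in $u$ and $v$: swapping ``first inner loop at $u$ plus second at $v$'' for ``first at $v$ plus second at $u$'' permutes the summands but leaves the total unchanged.

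The main obstacle, and the only point that is not purely bookkeeping, is to verify that $\rho(i)$ is independent of the extremity of $P$ from which $\prem(i)$ is launched. This should follow from inspection of \prem\/ in Lemma~\ref{lem:rdvchemin}: the protocol's timing is determined entirely by edge counts and speed-based idle counters on the virtual line $P$, both of which are symmetric with respect to its two endpoints (and the initial ``move at speed $1$ until reaching an extremity'' is vacuous since the agent already stands at one). Granted this symmetry, both agents incur identical per-iteration cost, so the difference in entry times remains $|t-t'|$ for every $i$, closing the induction.
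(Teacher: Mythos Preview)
Your argument is correct and is essentially an explicit unpacking of the paper's one-line justification: the paper simply observes that ``during every execution of the outer loop, the sets of actions performed by the two agents inside the loop are identical, differing only by their orders,'' which is exactly the symmetry of the sum $\sum_j(\tau_u(j)+\tau_v(j))+(2\nu-1)\rho(i)+2|C|$ you write down. One small caveat: you assert $\{\mathbf{\widehat{v}}_{far},\mathbf{\widehat{v}'}_{far}\}=\{u,v\}$, but the paper allows the possibility that both agents reach the \emph{same} extremity; in that case the per-iteration durations are trivially equal and your conclusion still holds.
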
 

To establish the claim, just notice that, during every execution of the outer loop, the sets of actions performed by the two agents inside the loop are identical, differing only by their orders. 

Note that we can express $|t-t'|=|(L+\widehat{L})-(L'+\widehat{L}')|$ where $L$ and $L'$ are defined in Claim~\ref{claim:delaybeta}, and $\widehat{L}$ (resp., $\widehat{L}'$) denotes the length of the basic walk leading from $\mathbf{\widehat{v}}$ (resp., $\mathbf{\widehat{v}'}$) to $\mathbf{\widehat{v}}_{far}$ (resp., to~$\mathbf{\widehat{v}'}_{far}$). A consequence of Claim~\ref{claim:reschedule} is the following lemma.

\begin{lemma}\label{lem:delay}
Let $t$ and $t'$ be the times of arrival of the two agents at $\mathbf{\widehat{v}}_{far}$ and $\mathbf{\widehat{v}'}_{far}$ after the execution of {\tt Synchro}, respectively. For every $i$, the delay between the two agents at the beginning of each execution of \prem$(i)$ cannot exceed $|t-t'| + 16 n\ell$. 
\end{lemma}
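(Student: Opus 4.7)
The plan is to follow the actions each agent performs from the start of an outer loop iteration to the start of an arbitrary execution of $\prem(i)$, and to identify the sources of desynchronization. By Claim~\ref{claim:reschedule}, at the beginning of every iteration of the outer loop the delay between the two agents equals $|t-t'|$. It therefore suffices to show that, within a single outer loop iteration, the delay grows by at most $16n\ell$ between the start of the iteration and the beginning of any execution of $\prem(i)$.

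Fix one outer loop iteration and consider the execution of $\prem(i)$ performed in the $j$-th pass of the first inner loop, where $j\in\{0,1,\dots,2(\nu-1)\}$. Before this execution, each agent has run $\BW(k)$, $\CBW(k)$, $\prem(i)$ for $k=0,1,\dots,j-1$, followed by $\BW(j)$ and $\CBW(j)$. Three observations single out the actual source of desynchronization: (i) $\CBW(k)$ exactly retraces $\BW(k)$ in reverse, so its number of rounds equals that of the preceding $\BW(k)$ for the same agent; (ii) the rendezvous path $P$ and its reverse $\overline{P}$ have the same length, so each full execution of $\prem(i)$ takes the same number of rounds for the two agents; and (iii) the only remaining contribution to the delay is the discrepancy between the number of rounds of $\BW(k)$ when started at $u$ versus when started at $v$, since the two agents enter the first inner loop from these two different extremities of the central edge.

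Denote by $\ell_a^k$ the length in rounds of $\BW(k)$ performed by agent $a\in\{1,2\}$. Since a basic walk in an $n$-node tree has period at most $2(n-1)$, each $\ell_a^k$ is at most $2(n-1)$, giving $|\ell_1^k - \ell_2^k|\leq 2(n-1)$. The accumulated desynchronization between the start of the outer loop and the beginning of the $j$-th execution of $\prem(i)$ is
\[
\left|2\sum_{k=0}^{j}(\ell_1^k - \ell_2^k)\right|\leq 4(j+1)(n-1).
\]
Using $j\leq 2(\nu-1)$ together with $\nu\leq 2\ell-1$ yields $j+1\leq 4\ell$, so this quantity is at most $16 n \ell$. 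Combined with the initial delay $|t-t'|$ via the triangle inequality, the delay at the beginning of any execution of $\prem(i)$ is bounded by $|t-t'|+16n\ell$, as claimed. The only conceptual obstacle is the careful bookkeeping of which instructions act symmetrically on the two agents and which do not; once the three observations above are in place, the estimate is immediate.
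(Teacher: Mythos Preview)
Your proof is correct and follows essentially the same approach as the paper's: both start from Claim~\ref{claim:reschedule}, observe that the completed executions of $\prem(i)$ contribute nothing to the delay, and then bound the discrepancy coming from the basic walks. The only difference is bookkeeping: the paper decomposes each $\BW(a)$ into the individual segment lengths $l_1,\dots,l_a$ between consecutive degree-$\neq 2$ nodes and bounds $2\sum_{a=1}^{j}\sum_{b=1}^{a}|l_b-l'_b|$, whereas you treat the total length $\ell_a^k$ of each $\BW(k)$ as a single quantity and bound $2\sum_{k\le j}|\ell_1^k-\ell_2^k|$ directly via $\ell_a^k\le 2(n-1)$; both routes arrive at the same $16n\ell$ bound.
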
 

\begin{proof}
For $j\geq 1$, let $l_j$ and $l'_j$ be the lengths (i.e., numbers of edges) of the paths in $T$ between the $(j-1)$th and the $j$th node of degree different from~2 that is met by the two agents, respectively, during their basic walk from their positions at the two extremities of $C$. At the $j$th iteration of the inner loop, one agent has traversed $2 \sum_{a=1}^j \sum_{b=1}^a l_b$ edges during $\BW(a)$ and $\CBW(a)$ for all $a=1,\dots,j$. The other agent has traversed  $2 \sum_{a=1}^j \sum_{b=1}^a l'_b$ edges during the same $\BW(a)$ and $\CBW(a)$. Since the number of rounds of \prem$(i)$ is the same for both agents, we get that their delay is at most: 
\begin{eqnarray*}
|t-t'|+2 \sum_{a=1}^j \sum_{b=1}^a | l_b - l'_b | & \leq & |t-t'| +4 (\nu-1) \sum_{b=1}^{2(\nu-1)} | l_b - l'_b | \\
 & \leq &  |t-t'| + 4 (\nu-1) \sum_{b=1}^{2(\nu-1)} \max\{l_b,l'_b\} \\
 & \leq &  |t-t'| + 8 (\nu-1) n  \\
 & \leq &  |t-t'| + 8 \nu n  \\
 & \leq &  |t-t'| + 16 n \ell. 
\end{eqnarray*} 
This completes the proof of the lemma. 
\end{proof}

\begin{lemma}\label{lem:delaybis}
Assume that the two agents have not met when they arrive at $\mathbf{\widehat{v}}_{far}$ and $\mathbf{\widehat{v}'}_{far}$ after the execution of {\tt Synchro}. For every $i$, if at the beginning of each execution of \prem$(i)$ the delay between the two agents is zero, then their initial positions were perfectly symmetrizable in $T$. 
\end{lemma}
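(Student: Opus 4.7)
The plan is to translate the hypothesis ``$\prem(i)$ is entered with zero delay, for every $i$'' into two structural identities about $T$, and then to use those identities to construct an automorphism of $T$ that certifies perfect symmetrizability of $(\mathbf{v},\mathbf{v}')$.

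To extract the identities, I would reuse the bookkeeping from the proof of Lemma~\ref{lem:delay}. By Claim~\ref{claim:reschedule}, the offset between the two agents at the start of each outer iteration is constantly $|t-t'|$. Inside an outer iteration, at the start of the call to $\prem(i)$ in inner iteration $j_0$, the two agents have accumulated, beyond that offset, the additional times $2\sum_{a=1}^{j_0}\sum_{b=1}^a l_b$ and $2\sum_{a=1}^{j_0}\sum_{b=1}^a l'_b$ respectively. Hence the delay there equals
\[
(t-t')+2\sum_{a=1}^{j_0}\sum_{b=1}^{a}(l_b-l'_b).
\]
Requiring this quantity to vanish for every $i$ and every $j_0\in\{0,\dots,2(\nu-1)\}$ gives $t=t'$ (take $j_0=0$), and a straightforward induction on $j_0$ then yields $l_j=l'_j$ for every $j\in\{1,\dots,2(\nu-1)\}$.

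Next, I would construct a graph automorphism $\Phi$ of $T$ extending the port-preserving involution $\phi$ of $T'$ satisfying $\phi(\mathbf{\widehat{v}}_{far})=\mathbf{\widehat{v}'}_{far}$ (such a $\phi$ exists because $T'$ is symmetric and $\mathbf{\widehat{v}}_{far},\mathbf{\widehat{v}'}_{far}$ are the two extremities of its central edge). For each edge $e$ of $T'$, let $p_e$ be the contracted path of $T$ representing $e$: the equalities $l_j=l'_j$ say exactly that $p_e$ and $p_{\phi(e)}$ have the same length, so $\phi$ can be extended to the interior of $p_e$ by the unique length-preserving bijection onto $p_{\phi(e)}$ that matches $\phi$ on the endpoints. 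Gluing these extensions together with $\phi$ gives an involutive automorphism $\Phi$ of $T$ with the property that the basic walks in $T$ from $\mathbf{\widehat{v}}_{far}$ and from $\mathbf{\widehat{v}'}_{far}$ are $\Phi$-parallel: at every step $\tau$, the two walks are at $\Phi$-related positions.

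Finally, I would use the ``no meeting'' assumption to rule out fixed nodes of $\Phi$, then define a $\Phi$-invariant port labeling, and finally show $\Phi(\mathbf{v})=\mathbf{v}'$. In inner iteration $j_0=2(\nu-1)$ of Figure~\ref{fig:phase2}, each agent performs a complete basic walk visiting every node of $T$; so if $\Phi$ had a fixed node $m$, both agents would be at $m$ at the step when their basic walks visit $m$, contradicting the premise that the agents never meet. Hence $\Phi$ has no fixed node, all $\Phi$-orbits have size exactly~$2$, and picking arbitrary port labels at one representative of each orbit and copying across $\Phi$ yields a port labeling $\mu$ of $T$ preserved by $\Phi$. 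The identity $\Phi(\mathbf{v})=\mathbf{v}'$ is then the delicate part: because the modified Explo-bis leaves the automaton state untouched at degree-2 nodes, the whole state dynamics of the protocol depends only on port labels at $T'$-nodes, which $\phi$ preserves. Combined with $t=t'$ and $l_j=l'_j$, this lets one show by induction on rounds that the two agents' trajectories are $\Phi$-parallel throughout the protocol, so in particular $\mathbf{v}'=\Phi(\mathbf{v})$. The main obstacle is exactly this parallelism: propagating $Y_s=\Phi(X_s)$ all the way back to time~$0$ across the Explo-bis, Synchro, and ``basic walk to $\mathbf{\widehat{v}}_{far}$'' phases requires carefully combining the arrival-time equality $t=t'$, the length equalities $l_j=l'_j$, the $T'$-level symmetry via $\phi$, and the non-meeting assumption.
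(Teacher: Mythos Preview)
Your first two stages---extracting $t=t'$ and $l_j=l'_j$ from the zero-delay hypothesis, and extending the port-preserving involution $\phi$ of $T'$ to a topological automorphism $\Phi$ of $T$---are exactly what the paper does. The construction of a $\Phi$-invariant labeling is also the right move.

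The genuine gap is in your final step, where you try to establish $\Phi(\mathbf v)=\mathbf v'$ by ``propagating $Y_s=\Phi(X_s)$ all the way back to time~0'' via induction on rounds. This backward propagation does not work: the agents are finite automata whose transition function is not injective, so knowing the two agents occupy $\Phi$-related positions at time~$t$ (and even that they are in the same internal state) does not let you conclude anything about their positions at earlier rounds. Forward induction would require $\Phi(\mathbf v)=\mathbf v'$ as the base case, which is precisely the goal. Moreover, even restricting attention to the purely ``basic-walk'' segments, you only know the \emph{sum} $L+\widehat L=L'+\widehat L'$; you do not know $L=L'$ or $\widehat L=\widehat L'$ separately, so you cannot reverse the two segments $\mathbf v\to\mathbf{\widehat v}$ and $\mathbf{\widehat v}\to\mathbf{\widehat v}_{far}$ one at a time and keep the two agents $\Phi$-aligned.

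The paper avoids this entirely by a single observation you are missing: because $\mathbf{\widehat v}$ is chosen to be a \emph{leaf} whenever $\deg(\mathbf v)=2$, the concatenation of the basic walk $\mathbf v\to\mathbf{\widehat v}$ with the basic walk $\mathbf{\widehat v}\to\mathbf{\widehat v}_{far}$ is itself one basic walk of length $L+\widehat L$. Hence a single \emph{counter} basic walk of that length, launched from $\mathbf{\widehat v}_{far}$ through the central-path port, retraces it and lands exactly on $\mathbf v$; likewise from $\mathbf{\widehat v'}_{far}$ one lands on $\mathbf v'$. Since counter basic walks are oblivious to port labels at degree-2 nodes, and since $\phi$ preserves ports at $T'$-nodes, the two counter basic walks (same length, $\Phi$-related starting nodes, same starting port) are $\Phi$-parallel step by step, yielding $\mathbf v'=\Phi(\mathbf v)$ directly. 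This replaces your entire backward-induction program with one reversible walk.
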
 

\begin{proof}
Fix $i\geq 1$, and assume that, at the beginning of each of the $2\nu-1$ executions of \prem$(i)$ in the outer loop, the delay between the two agents is zero. This implies that, using the same notations as in the proof of Lemma~\ref{lem:delay}, for every $j=0,\dots,2(\nu-1)$ we have
\[ t + 2 \sum_{a=1}^j \sum_{b=1}^a l_b  = t' +2 \sum_{a=1}^j \sum_{b=1}^a l'_b \; .  \]
Therefore, $$t=t'$$ and $$l_j=l'_j\; \mbox{for every $j=0,\dots,2(\nu-1)$}.$$ 
These equalities imply that the tree $T$ is topologically symmetric: there is an automorphism $f$ which extends the port preserving automorphism $f'$ of $T'$ mapping the two symmetric subtrees $T'_1$ and $T'_2$ of $T'$ hanging at the two extremities of the central edge of $T'$ ($f'$ induces an isomorphism between $T'_1$ and $T'_2$ preserving port labels). Indeed, since  the two agents have not met when both of them arrive at $\mathbf{\widehat{v}}_{far}$ and $\mathbf{\widehat{v}'}_{far}$, the fact that $t=t'$ implies that $\mathbf{\widehat{v}}_{far}\neq \mathbf{\widehat{v}'}_{far}$. We have $\mathbf{\widehat{v}'}_{far} = f'(\mathbf{\widehat{v}}_{far})$. More generally, if $x_j$ (resp., $x'_j$) denotes the $j$th node of $T'$ reached by the basic walk starting at $\mathbf{\widehat{v}}_{far}$ (resp., $\mathbf{\widehat{v}'}_{far}$), we have $x'_j=f'(x_j)$. By definition, $l_j$ (resp., $l'_j$)  is the length of the path in $T$ between $x_{j-1}$ and $x_j$ (resp., between $x'_{j-1}$ and $x'_j$). Since $l_j=l'_j$, we get that the number of degree-2 nodes in $T$ between $x_{j-1}$ and $x_j$ is the same as the number of degree-2 nodes in $T$ between $x'_{j-1}$ and $x'_j$. Thus $f'$ can be extended to match nodes of these two paths, preserving adjacencies. Since this holds for every $j$, we get that $T$ is topologically symmetric\footnote{The automorphism $f$ does not necessarily preserve the port numbers in $T$ along the paths between nodes with degree different from~2. This is the reason why, as opposed to what is claimed in~\cite{FP2}, the initial positions of the agents are not necessarily symmetric  in $T$. We show however that they are perfectly symmetrizable in $T$.}.

To sum up, the tree $T$ is topologically symmetric (by automorphism $f$), and its contraction tree $T'$ is symmetric (by automorphism $f'$, which preserves port labels). A consequence of this fact is the following crucial observation. Let us consider the following port labeling $\mu$. The port numbers at nodes of degree larger than~2 are the same as in $T'$. The port labeling is completed arbitrarily at nodes of degree 2, preserving the following condition: if $\{z,z'\}$ is an edge in $T$  with at least one extremity $z$ of degree~2, then the port number at $z$ corresponding to  $\{z,z'\}$  is equal to the port number at $f(z)$ corresponding to  $\{f(z),f(z')\}$. Two basic walks starting from two symmetric positions in $T'$ generate two sequences of nodes such that the $i$th nodes of the two sequences are symmetric in $T$ with respect to $\mu$. Indeed, the ``branching'' nodes, i.e., the nodes of degree at least~3, are symmetric, and basic walks are oblivious of the port numbers at nodes of degree at most~2. The same observation holds for counter basic walks. It also holds if the port number of the outgoing edge from the starting nodes are not~0, under the simple assumption that they are equal. 

We use the above observation to show that the two nodes $\mathbf{v}$ and $\mathbf{v'}$ are perfectly symmetrizable. Since $T'$ is symmetric, it is sufficient to show that $\mathbf{v}$ and $\mathbf{v'}$ are topologically symmetric. The two agents have reached nodes $\mathbf{\widehat{v}}_{far}$ and $\mathbf{\widehat{v}'}_{far}$ after procedure {\tt Synchro}, entering these nodes from the central path. Indeed, on the one hand, $\mathbf{\widehat{v}}_{far}$ and $\mathbf{\widehat{v}'}_{far}$ are the \emph{farthest} extremity of the central edge of $T'$ coming from  $\mathbf{\widehat{v}}$ and $\mathbf{\widehat{v}'}$, respectively, and, on the other hand, the basic walks reaching these nodes are of minimum length (cf., Fact~\ref{fact}). Since  $\mathbf{\widehat{v}}_{far}$ and $\mathbf{\widehat{v}'}_{far}$ are symmetric in $T'$, the port numbers of the edges incident to these nodes on the central path are identical. Let $i$ be this port number. Consider two counter basic walks of length $t=t'$ starting from $\mathbf{\widehat{v}}_{far}$ and $\mathbf{\widehat{v}'}_{far}$, leaving the starting node by port number $i$. These counter basic walks proceed backwards, first along the basic walk from $\mathbf{\widehat{v}}$ to $\mathbf{\widehat{v}}_{far}$ for $\widehat{L}$ steps, and next along the basic walk from  $\mathbf{v}$ to $\mathbf{\widehat{v}}$ for $L$ steps. If $\mathbf{\widehat{v}}=\mathbf{v}$ then $L=0$. If $\mathbf{\widehat{v}}\neq\mathbf{v}$, then the articulation between the two basic walks $\mathbf{v} \to \mathbf{\widehat{v}}$ and  $\mathbf{\widehat{v}} \to \mathbf{\widehat{v}}_{far}$   occurs at $\mathbf{\widehat{v}}=\mathbf{v}_{leaf}$. Since we have chosen this latter node as a \emph{leaf}, the sequence of basic walks $\mathbf{v} \to \mathbf{\widehat{v}}$ and  $\mathbf{\widehat{v}} \to \mathbf{\widehat{v}}_{far}$ is actually equal to a basic walk $\mathbf{v} \to \mathbf{\widehat{v}}_{far}$ of length $t=L+\widehat{L}$. Hence the counter basic walk of length $t$ starting from $\mathbf{\widehat{v}}_{far}$ by port $i$ leads  to the initial position $\mathbf{v}$. The same holds for the other walk of length $t'=t$. Therefore, $\mathbf{v}$ and $\mathbf{v'}$ are topologically symmetric, and thus they are perfectly symmetrizable.
\end{proof}

In view of the previous lemma, since $\mathbf{v}$ and $\mathbf{v'}$ are not perfectly symmetrizable, at each execution $i$ of the outer loop, there is an execution $j$ of $\prem(i)$ for which the two agents do not start the second phase at the same time from their respective extremities of $P$. Moreover, by Lemma~\ref{lem:delay}, during this $j$th execution of $\prem(i)$, the delay $\delta$ between the two agents is at most  $|t-t'| + 16 n\ell$. We have $|t-t'|=|(L+\widehat{L})-(L'+\widehat{L}')|$, where the four parameters are lengths of basic walks. These four basic walks have lengths at most $2(n-1)$. Hence, $|t-t'|\leq 4n$. Therefore, $\delta \leq 20 n\ell$. The length of the rendezvous path $P$ is larger than $20n\ell$ because $B_u$ and $B_v$ are each of length at least $2n$. Therefore, at the first time when both agents are simultaneously in the $j$th execution of $\prem(i)$, they occupy two non perfectly symmetrizable positions in $P$: one is at one extremity of $P$, and the other is at some node of $P$ at distance $\delta>0$ along $P$ from the other extremity of $P$. Moreover, since the delay $\delta$ between the two agents is smaller than the length of the path $P$, the agent first executing $\prem(i)$ has not yet  completed the first traversal of $P$ when the other agent starts $\prem(i)$. As a consequence, the two agents act as if $\prem(i)$  were executed with both agents starting simultaneously at non perfectly symmetrizable positions in the path. Now, for small values of $i$, $\prem(i)$ may not achieve rendezvous in $P$. However, in view of Lemma~\ref{lem:rdvchemin}, for some $i = O(\log n)$, rendezvous will be completed whenever the initial positions of the agents were not perfectly symmetrizable in $T$. 

We complete the proof by checking that each agent uses $O(\log \ell + \log\log n)$ bits of memory. Protocol {\tt Explo-bis} executed in $T$ consumes the same amount of memory as Protocol  {\tt Explo} executed in $T'$. Since $T'$ has at most $2\ell-1$ nodes,  {\tt Explo-bis} uses $O(\log \ell)$ bits of memory. During the second stage of the rendezvous, a counter is used for identifying the index $j$ of the inner loop. Since $j\leq 2\nu \leq 4\ell$, this counter uses $O(\log \ell)$ bits of memory. All executions of \prem\/ are independent, and performed one after the other. Thus, in view of Lemma~\ref{lem:rdvchemin}, a total of $O(\log\log n)$ bits suffice to implement these executions. The index $i$ of the outer loop grows until it is large enough so that $\prem(i)$ achieves rendezvous in a path of length $O(n\ell)$. Thus, $i\leq \log(n\ell)$, and thus $O(\log\log(n\ell))= O(\log\log n)$ bits suffice to encode this index. This completes the proof of Theorem~\ref{theo:upperbound}. 

%----------------------------------------------------------
\subsection{The lower bound $\Omega( \log\log n)$}
%----------------------------------------------------------

In this section we prove the lower bound $\Omega( \log\log n)$  on the size of memory required for rendezvous with simultaneous start in a $n$-node line.

\begin{theorem}\label{theo:lo}
Rendezvous with simultaneous start  in the $n$-node line requires agents with
$\Omega( \log\log n)$ bits of memory.
\end{theorem}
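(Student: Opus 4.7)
The plan is to show that any agent with $K = 2^k$ states fails rendezvous with simultaneous start on some $n$-node line with $n \leq 2^{O(K)}$; this yields $k \geq \Omega(\log\log n)$. The adversary will use the uniform port labeling: port $0$ on the left and port $1$ on the right at each edge. Under this labeling the agent's behavior in the interior of the line is translation invariant, so two agents started simultaneously at positions $a < b$ drift in lockstep, maintaining the constant gap $b - a$ forever. A rendezvous on a finite line can therefore only occur through interaction with an endpoint.

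First I would analyze the trajectory of a single agent on the infinite uniform line. Pigeonholing on pairs (state, incoming port) shows that after at most $3K$ steps the trajectory becomes periodic with period $T \leq 3K$ and per-period displacement $D$ satisfying $|D| \leq T$. If $D = 0$, the trajectory lives in a window of width $O(K)$; placing two agents far apart and far from both endpoints on a line of length $\Theta(K)$ keeps their windows disjoint and prevents meeting, and I can pick $a + b \neq n + 1$ to ensure non-perfect-symmetrizability. This subcase actually yields an even stronger $\Omega(\log n)$ lower bound, so I may restrict attention to $D \neq 0$; WLOG $D > 0$.

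For $D > 0$, when an agent first hits the right endpoint its state depends on $n \bmod T$; the post-bounce trajectory has its own period $T' \leq 3K$ and displacement $D'$; and so on through subsequent bounces. Because the agent has only $K$ states, the set of distinct interior regimes encountered across all bounces is bounded by $O(K)$, and every regime's period is $O(K)$. I would then argue that the set of gaps $d = b - a$ for which two simultaneously-started agents can ever coincide on the $n$-line is a union of at most $O(K)$ arithmetic progressions whose moduli are derived from these regime parameters, all bounded by $O(K)$.

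The conclusion rests on a simple arithmetic extraction: $\lcm(1, 2, \dots, K) = 2^{O(K)}$ by the Prime Number Theorem, so I can pick $n$ of size $2^{O(K)}$ together with a gap $d$ coprime to every modulus appearing, ensuring $(a, a+d)$ is not perfectly symmetrizable (choose $a$ so that $2a + d \neq n+1$) and evades every meeting condition. Rendezvous therefore fails on a line of length $n = 2^{O(K)}$, equivalently $k \geq \Omega(\log\log n)$. The main obstacle is making the bounce analysis rigorous: concretely controlling the number and sizes of distinct regimes across successive reflections off each endpoint, and verifying that the meeting set really is a union of arithmetic progressions of controlled moduli. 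This hinges on the finiteness of the state space to keep all moduli bounded by $O(K)$, and is ultimately what allows the PNT-based extraction to go through.
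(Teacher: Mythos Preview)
Your approach is genuinely different from the paper's and, as you yourself flag, the bounce analysis is the crux---and it is not merely a matter of bookkeeping. There are two concrete problems.

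First, the structural claim that the set of ``meeting gaps'' $d$ is a union of $O(K)$ arithmetic progressions with moduli $O(K)$ is not justified and is likely false in that form. Once agent $B$ bounces off the right endpoint, the two agents are no longer in lockstep: $A$ is still in the original interior regime while $B$ is in a new one, and the time until $A$ itself bounces depends on $d$ through the drift. After both have bounced, the regime $B$ enters at its \emph{second} bounce (off the left end) depends on $n$ modulo the period of its first post-bounce regime, and so on. The meeting conditions that arise from comparing two agents in different regimes, with phase shifts accumulated across several bounces, naturally produce congruences whose moduli are \emph{products} of the individual regime periods, not the periods themselves. So you should expect moduli of size up to $\lcm$ of several $O(K)$ numbers rather than $O(K)$, and the number of distinct conditions can proliferate with the number of bounces before the joint system becomes periodic.

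Second, even granting the AP structure, the extraction step is wrong as written: picking $d$ coprime to the moduli does not avoid an arithmetic progression $a + m\mathbb{Z}$ unless $a \equiv 0 \pmod m$, which you have not argued. And with $O(K)$ progressions each of density at least $1/O(K)$, their union can easily cover every residue, so a counting argument alone does not produce an avoiding $d$.

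By contrast, the paper sidesteps the bounce bookkeeping entirely. It uses the \emph{symmetric} $2$-edge-coloring (same port at both ends of each edge), places the agents at distance~$1$, and exploits a Parity Lemma: the inter-agent distance stays odd as long as the difference in idle counts is even. The line is built with unequal arms $x \neq x'$ (so the start is not perfectly symmetrizable), and the key quantity is $\gamma = \lcm(|C_1|,\dots,|C_r|)$ over the cycles of the degree-$2$ transition digraph. One shows the agents alternate ``bouncing'' and ``traversal'' phases with a fixed delay $2\gamma$; during bouncing they are far apart by construction, and during traversal the Parity Lemma (with $|q - q'|$ an even multiple of the cycle length) forces the distance to remain odd, so they cross on an edge rather than meet. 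The line has length $O(K^K)$, giving $k = \Omega(\log\log n)$. The parity device is exactly what lets the paper avoid tracking the combinatorics of iterated reflections that your sketch leaves open.
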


The rest of the section is dedicated to the proof of Theorem~\ref{theo:lo}. For proving  the theorem, note that we can restrict 
ourselves to lines whose edges are properly colored 1 and 2, so that the port numbers at the two extremities of an edge colored 
$i$ are set to $i$. In this setting, the transition function of an agent in a line is $\pi: S\times 
\{1,2\} \to S$ that describes the transition that occurs when an agent enters a node of degree $d\in\{1,2\}$ in state $s\in S$. In 
this situation, the agent changes its state to state $s'=\pi(s,d)$, and performs the action $\lambda(s')$. The fact that one does 
not need to specify the incoming port number is a consequence of the edge-coloring, which implies that whenever an agent leaves a 
node by port $i$, it enters the next node by port $i$ too. 

Let us fix two identical agents $A$ and $A'$, with finite state set $S$, and transition function $\pi$. Let $\pi': S\to S$ 
be the transition function applied at nodes of degree~2 of the edge-colored line, i.e., $\pi'(s)=\pi(s,2)$ for any $s\in S$. To 
$\pi'$ is associated its transition digraph, whose nodes are the states in $S$, and there is an arc from $s$ to $s'$ if and only 
if $s'=\pi'(s)$. This digraph is composed of a certain number of connected components, say $r$, each of them of a similar 
shape, that is a circuit with inward trees rooted at the nodes of the circuit. Let $C_1,\dots,C_r$ be the $r$ circuits 
corresponding to the $r$ connected components of the transition digraph, and let $\gamma$ be the least common multiple of the 
number of arcs of these circuits, i.e., $\gamma=\lcm(|C_1|,\dots,|C_r|)$. We prove that there is a line of length proportional 
to $2\gamma+|S|$ in which $A$ and $A'$ do not rendezvous. 

First, observe that if $A$ and $A'$ cannot go at arbitrarily large distance from their starting positions, say they go at maximum 
distance $D$, then they cannot rendezvous in a line of length $4D+4$. Indeed, if the initial positions are two nodes at  
distance $2D+1$, and at distance at least $D+1$ from the extremities of the line, then the ranges of activity of the two agents 
are disjoint, and thus they cannot meet (one edge is added at one extremity of the line to break the symmetry of the initial 
configuration). 

Thus from now on, we assume that both agents can go at arbitrarily large distance from their starting positions. 

For the purpose of establishing our result, place the two agents $A$ and $A'$ on two adjacent nodes $v_A$ and $v_{A'}$ of an 
infinite line (whose edges are properly colored). Let $e=\{v_A,v_{A'}\}$ be the edge linking these two nodes. 

\begin{itemize}
\item Let $t_0$ be large enough so that $A$ is at distance at least  $2\gamma+|S|$ from its starting position after $t_0$ steps. 
\end{itemize}

Since $t_0>|S|$, agent $A$ at time $t_0$ is in some state $s_i \in C_i$ for some $i\in\{1,\dots,r\}$.  In fact, since $|C_i|$ 
divides $\gamma$, agent $A$ has fully executed $C_i$ at least twice. 

We define the notion of \emph{extreme position} for a circuit $C$. Let $s,\pi'(s),\dots,\pi'^{(k)}(s)$ be a circuit, with 
$s=\pi'^{(k)}(s)$. Assume that agent $A$ starts in state $s$ from node $u_0$ at distance at least $k+1$ 
from both extremities of the 
line. After having performed $C$ exactly once, i.e., after $k$ steps, agent $A$ is at some node $u_k$, back in state $s$. Let 
$u_0,u_1,u_2,\dots,u_k$ be the $k+1$ non necessarily distinct nodes visited by $A$ while executing $C$. The  \emph{extreme 
position} for $C$ starting in state $s$ is the node $u_j$ 
satisfying $$\dist(u_0,u_j)= \dist(u_0,u_k)+\dist(u_k,u_j), \;\;\; \mbox{and} 
\;\;\; \dist(u_0,u_j)=\max_{0\leq \ell \leq k}\dist(u_0,u_\ell).$$ Let $u_i$ be the extreme position for $C_i$ starting 
in $s_i$, 
and let us define the following parameters: 

\begin{itemize}
\item $\tau$ is the first time step among the $|C_i|$ steps after step $t_0$ at which $A$ reaches $u_i$. 
\vspace{-1.5ex}
\item $x$ is the distance of agent $A$ at time $\tau$ from its original position, i.e., $x=\dist(u_i,v_A)$; 
\vspace{-1.5ex}
\item $\tau'=\tau+2\gamma$;
\vspace{-1.5ex}
\item $x'$ is the distance of agent $A'$ at time $\tau'$ from its original position $v_{A'}$.
\end{itemize}

Note that, by symmetry of the port labeling, and from the fact that $A$ and $A'$ are identical
and operate in an infinite line, the two agents are on the two 
different sides of edge $e$ at time $\tau$. Note also that, between times $\tau$ and $\tau'$, agent $A'$ keeps on going further 
away from its original position, by repeating the sequence of actions determined by the circuit $C_i$. Hence $x'\neq x$. Actually, 
we have $x'>x$. We can therefore consider  the following construction. 

\paragraph{Initial configuration of the agents.}

Let $\caL$ be the properly 2-edge-colored line of length $x+x'+1$, formed by $x$ edges, followed by one edge called $e$, and 
followed by $x'$ edges. The two agents $A$ and $A'$ are placed at the two extremities $v_A$ and $v_{A'}$ of $e$, the same way they 
were placed at the two extremities of $e$ in the infinite line used to define $x$ and $x'$. 
\medbreak

Since $x\neq x'$, the initial positions of agents are not perfectly symmetrizable. 
Nevertheless, we prove that the two agents never meet in $\caL$, and thus rendezvous
is not accomplished. The 
adversary imposes no delay between the starting times of the agents, i.e., 
they both start acting simultaneously from their respective 
initial positions. 

One ingredient used for proving that the two agents do not rendezvous is the following general result, that we state as a lemma for 
further reference. 

\begin{lemma}
{\bf (Parity Lemma)}Consider two (not necessarily identical) agents initially at odd distance in a tree $T$, that 
start acting simultaneously in $T$. Let $t\geq 1$. Assume that one agent stays idle $q$ times in the time interval $[1,t]$, while 
the other one stays idle $q'$ times in the same time interval. If $|q-q'|$ is even, then the two agents are at odd distance at 
step $t$. 
\end{lemma}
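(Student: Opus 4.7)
The plan is to exploit the fact that every tree is bipartite, so parity of distance can be tracked via a $2$-coloring of the nodes. First I would fix any proper $2$-coloring $\chi:V(T)\to\{0,1\}$ of $T$, which exists since $T$ is acyclic and hence bipartite. The key elementary observation is then: for any two nodes $u,v\in V(T)$, the distance $\dist(u,v)$ is odd if and only if $\chi(u)\neq\chi(v)$. This follows because the unique $u$--$v$ path in $T$ alternates colors, so its length has the parity of $\chi(u)\oplus\chi(v)$.

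Next I would track how each agent's color evolves in time. Whenever an agent traverses an edge, it moves to a neighbor, so its color flips; whenever it stays idle, its color is unchanged. If an agent starts at a node of color $c$ and, during the interval $[1,t]$, has been idle exactly $q$ times, then it has performed exactly $t-q$ edge traversals, so at step $t$ its color equals $c \oplus ((t-q)\bmod 2)$.

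Applying this to the two agents, let $c$ and $c'$ denote their respective starting colors. By hypothesis the agents are initially at odd distance, so $c\neq c'$. At step $t$, the two agents occupy nodes of colors
\[
c \oplus ((t-q)\bmod 2) \quad\text{and}\quad c' \oplus ((t-q')\bmod 2).
\]
These two colors differ if and only if $c\oplus c' \oplus ((q-q')\bmod 2) = 1$. Since $c\neq c'$, this amounts to requiring $q-q'$ to be even, which is exactly the hypothesis $|q-q'|$ even. Thus the agents occupy nodes of different colors at step $t$, which by the opening observation means they are at odd distance at step $t$, as required.

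I do not expect any real obstacle: the whole content is the bipartiteness of $T$ plus a one-line parity computation on the number of moves made. The only mild subtlety to mention in the write-up is that $|q-q'|$ being even is equivalent to $q-q'$ being even, and equivalent to $q+q'$ being even, so the statement is symmetric between the two agents and independent of which one is labeled ``first''.
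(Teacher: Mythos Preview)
Your proof is correct. It differs from the paper's argument in its decomposition: the paper tracks the parity of the inter-agent distance directly, observing that this parity flips precisely at those steps where exactly one of the two agents moves, and then counts that the number of such flips is $(q-a)+(q'-a)$ where $a$ is the number of steps at which both are simultaneously idle, hence even when $q+q'$ is even. You instead exploit bipartiteness to assign each agent its own $\{0,1\}$-color and track each agent's color independently (it flips exactly on moves), combining only at the end via $\chi(u)\oplus\chi(v)$. Your route is arguably cleaner in that it decouples the two agents entirely until the final comparison, avoiding the need to introduce the auxiliary quantity $a$; the paper's route has the minor advantage of not invoking the bipartite structure of $T$ explicitly, working purely with the distance. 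Both are one-paragraph parity computations and neither is materially shorter or more general than the other.
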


\begin{proof}
At any step, if one agent moves while the other one stays idle, then the parity of their distance changes. On the other 
hand, if both agents move or both stay idle, then the parity of their distance remains unchanged. Let $a$ be the number of 
steps in $[1,t]$ when both agents were idle simultaneously. Then the parity of the inter-agent distance changes exactly 
$(q-a)+(q'-a)$ times in the time interval  $[1,t]$. Since $|q-q'|$ is even, $q+q'$ is also even, and thus $(q-a)+(q'-a)$ is even 
too.  Thus the parity of the inter-agent distance is the same at time 1 and at time $t$.
\end{proof}

The Parity Lemma enables us to establish the following. 

\begin{lemma}\label{lem:beforetau}
The two agents $A$ and $A'$ do not meet during the first $\tau$ steps.
\end{lemma}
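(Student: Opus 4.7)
The plan is to show that during the first $\tau$ steps, the actions of $A$ and $A'$ in $\caL$ coincide with what they would perform in the infinite line, and then invoke the Parity Lemma to rule out a meeting. First I would observe that in the infinite, properly $2$-edge-colored line, reflection about the midpoint of $e$ is an automorphism that swaps $v_A$ and $v_{A'}$ while preserving every port color. Since $A$ and $A'$ are copies of the same deterministic automaton, and since the $2$-edge-coloring makes the incoming port at each step equal to the previously chosen outgoing port, the two agents follow an identical state sequence and their positions are symmetric with respect to the midpoint of $e$ at every step. In particular, their distance in the infinite line is always odd, so they never occupy the same node there, and their numbers of idle moves in every prefix are equal.

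Next I would argue that throughout $[1,\tau]$ neither agent reaches a leaf of $\caL$, which guarantees that their trajectories in $\caL$ are identical to their trajectories in the infinite line. As long as every visited node has degree $2$, the input read at each step in $\caL$ matches the input read in the infinite line, so both the mirror identity and the common-state property persist inside $\caL$. By construction, $\tau$ is the first step at which $A$ reaches the left endpoint of $\caL$ (the node $u_i$, at distance $x$ from $v_A$): the extreme of each earlier execution of the circuit $C_i$ lies strictly closer to $v_A$ because consecutive executions differ by the nonzero leftward drift $\Delta$, and the warm-up phase before $A$ first enters $C_i$ has length only $O(|S|)$, which cannot reach distance $x$ once $t_0$ is chosen large enough. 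A symmetric inspection of the infinite-line trajectory shows that the maximum rightward excursion of $A$ during $[0,\tau]$ is also only of order $|S|$, hence strictly less than $x$; by the mirror identity, the leftmost position reached by $A'$ then stays strictly to the right of $-x$, while the rightmost position of $A$ stays strictly to the left of $1+x \leq 1+x'$. Consequently neither agent reaches a leaf of $\caL$ before step $\tau$.

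With both agents confined to degree-$2$ nodes and mirrored throughout $[1,\tau]$, their states and hence their idle moves coincide at every step, so $|q-q'|=0$ is even in every prefix. Since the initial distance is $1$, the Parity Lemma then gives that $A$ and $A'$ are at odd distance at every $t\in[1,\tau]$, and in particular cannot be at the same node, which is exactly the statement of the lemma. The main obstacle is the bound on $A$'s rightward excursions in the second step: although its long-term drift is leftward toward $u_i$, the warm-up phase or a single period of $C_i$ may briefly push $A$ to the right of $v_A$, and we must ensure these transient positions lie strictly inside $(-x,1+x)$ so that the mirrored positions of $A'$ remain strictly inside $\caL$. This is precisely what the ``large enough'' choice of $t_0$, dominating $|S|$, is designed to achieve.
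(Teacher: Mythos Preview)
Your argument is correct and follows the same line as the paper's: both apply the Parity Lemma with $q=q'$, using that the two agents execute an identical sequence of actions on $[1,\tau]$ and hence stay at odd distance. The paper asserts this identity in one line, whereas you supply the supporting justification it leaves implicit---namely that neither agent reaches an endpoint of $\caL$ before step~$\tau$, so the finite line is indistinguishable from the infinite one over that interval.
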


\begin{proof}
Since the agents perform the same sequence of actions in the time interval $[1,\tau]$, we get that, for any $t\leq \tau$, the two 
agents have remained idle the same mumber of times in the time interval $[1,t]$, and thus, by the Parity Lemma (with $q=q'$), they 
are at odd distance at step $t$, since they originally started at distance~1. In other words, the two agents remain permanently at 
odd distance during the time interval $[1,\tau]$. Thus they cannot meet during this time interval.  
\end{proof} 

At step $\tau$, the behavior of the two agents becomes different. Indeed, agent $A$ is reaching one extremity of $\caL$, while $A'$ 
is visiting a degree-2 node. 

We analyze the states of the two agents when they reach extremities of $\caL$ during the execution of their protocol. Assume that 
agent $A$ reaches  the extremities of $\caL$ at least $k\geq 1$ times. Let $\sigma_j$ be the state of agent $A$ when it reaches 
any of the two extremities of $\caL$ for the $j$th time, $1\leq j\leq k$. 

\begin{lemma}\label{lem:samestate}
Agent $A'$ reaches the extremities of $\caL$ at least $k$ times. Moreover, if $\sigma'_j$ is the state of agent $A'$ when it 
reaches any of the two extremities of $\caL$ for the $j$th time, $1\leq j\leq k$, then $\sigma'_j=\sigma_j$. 
\end{lemma}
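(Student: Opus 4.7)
The plan is to exhibit two symmetries of the system. During $[0,\tau]$, the agents are mirror symmetric with respect to the central edge $e$. From time $\tau$ onward, the relevant symmetry is instead with respect to the midpoint of $\caL$, combined with a time shift of $2\gamma$. The first symmetry, together with a direct infinite-line analysis of $A'$ during $[\tau,\tau']$, will deliver the base case $\sigma'_1=\sigma_1$. The second symmetry will drive an induction handling all $j\geq 2$.

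For the first symmetry, I would verify that during $[0,\tau]$ the two agents are always in the same state and at mirror-symmetric positions with respect to the midpoint of $e$. They begin in state $s_0$ at the mirror-symmetric nodes $v_A$ and $v_{A'}$; by the argument already used for Lemma~\ref{lem:beforetau} neither reaches an extremity of $\caL$ before time $\tau$, so both always transition via $\pi(\cdot,2)$; and the 2-edge-coloring of $\caL$ is mirror symmetric around $e$, so identical actions send the two agents to mirror positions. Hence at time $\tau$, when $A$ first reaches the left extremity of $\caL$ (position $0$) in state $\sigma_1$, agent $A'$ sits at the internal node at position $2x+1$ in the same state. To derive $\sigma'_1=\sigma_1$, I would then observe that as long as $A'$ stays at internal nodes, its behavior in $\caL$ agrees with its behavior in the infinite line used to define $\tau'$ and $x'$. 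In that infinite line, $A'$'s state at time $\tau'=\tau+2\gamma$ is still $\sigma_1$ (since $|C_i|$ divides $\gamma$), and its position is at distance $x'$ from $v_{A'}$ on the side opposite to $v_A$, which is exactly the right extremity $x+x'+1$ of $\caL$. To justify that $A'$ remains strictly inside $\caL$ throughout $[\tau,\tau')$, let $D>0$ denote the net distance traveled by $A'$ per execution of $C_i$, so that $x'-x=(2\gamma/|C_i|)\,D$; then the forward extremes of $A'$'s successive executions of $C_i$ after time $\tau$ lie at positions $(2x+1)+(m-1)D$ for $m=1,\dots,2\gamma/|C_i|+1$, of which only the last equals $x+x'+1$, reached exactly at time $\tau'$, while earlier ones are strictly smaller; and the backward extent of $C_i$ is at most $|C_i|\leq\gamma<d$, keeping $A'$'s minimum position strictly positive.

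For $j\geq 2$, I would then prove by induction on $t\geq\tau$ that $A'$ at time $t+2\gamma$ sits at the mirror of $A$'s position at time $t$ with respect to the midpoint of $\caL$, and in the same state as $A$. The crucial ingredient is that the midpoint mirror $p\mapsto (x+x'+1)-p$ is a color-preserving automorphism of $\caL$: this holds because $x+x'$ is even, which follows from $x'-x=2\gamma D/|C_i|$ being even since $2\gamma/|C_i|$ is an even positive integer. Mirror positions therefore have identical degree and identical local port structure, so identical transitions send the two agents to mirror positions (and null moves trivially preserve mirror positions). The base case $t=\tau$ was established in the previous paragraph; the induction immediately yields $\sigma'_j=\sigma_j$ and shows that the $j$-th extremity visit of $A'$ occurs at time $t_j+2\gamma$, proving in particular that $A'$ reaches extremities at least $k$ times. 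The main obstacle is the technical bookkeeping just described, bounding $A'$'s infinite-line trajectory inside $\caL$ during $[\tau,\tau')$ via the structure of $C_i$ and the bound $d\geq 2\gamma+|S|$; once that is in place, the two symmetries together with periodicity modulo $|C_i|$ wrap up the argument cleanly.
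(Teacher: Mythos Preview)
Your proposal is correct and follows essentially the same two-step approach as the paper: an infinite-line analysis of $A'$ on $[\tau,\tau']$ for the base case $\sigma'_1=\sigma_1$, followed by an induction exploiting the mirror symmetry of $\caL$ between consecutive extremity hits. You are in fact more careful than the paper on one point it leaves implicit---you verify that the midpoint reflection of $\caL$ preserves the edge coloring (via $x'-x=(2\gamma/|C_i|)D$ with $2\gamma/|C_i|$ even, hence $x+x'$ even), which is precisely what makes the ``symmetric configuration'' induction step go through; your undefined symbol $d$ should be~$x$.
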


\begin{proof}
First, let us consider the case $k=1$. After time $\tau$ (i.e., after the time when $A$ reaches one extremity of $\caL$, in state 
$\sigma_1$), agent $A'$ keeps on repeating the execution of circuit $C_i$. This leads $A'$ to eventually reach the other 
extremity of $\caL$. Recall that we have considered the behavior of $A$ after time $t_0$ when $A$ was in state $s_i \in C_i$, and 
that $\tau$ was defined as the first time step among the $|C_i|$ steps after step $t_0$ at which $A$ reaches the extreme position 
$u_i$ of $C_i$ starting at $s_i$. Since $\tau'=\tau+2\gamma$, and since $|C_i|$ divides $\gamma$, we get that agent $A'$ is in 
state $\sigma_1$ at time $\tau'$. Moreover,  since $|C_i|$ divides $\gamma$, $A'$ reaches the extreme position $u_i$ of $C_i$ at 
time $\tau'$, and therefore time $\tau'$ is the first time when $A'$ is at distance $x'$ from $e$. Therefore $\sigma'_1=\sigma_1$, 
and the lemma holds for $k=1$. 

For $k>1$, the proof is by induction on the number of times $j$ agent $A$ reaches an extremity of $\caL$, $j=1,\dots,k$. By the 
previous arguments, the result holds for $j=1$. When agent $A$ reaches an extremity of $\caL$ for the $j$th time, it is in state 
$\sigma_j$. By the induction hypothesis, when agent $A'$ reaches an extremity of $\caL$ for the $j$th time, it is also in state 
$\sigma'_j=\sigma_j$. Therefore, the configuration for $A$ and $A'$ between two consecutive hits of an extremity of $\caL$ is 
actually symmetric. As a consequence, $\sigma'_{j+1}=\sigma_{j+1}$, and the lemma holds. 
\end{proof}

After time $\tau$ the walks of the agents can be decomposed in two different types of subwalks. A \emph{traversal period} for an 
agent is the subwalk between two consecutive hits of two different extremities of $\caL$ by this agent. A \emph{bouncing 
period} for an agent is a subwalk (possibly empty) performed between two consecutive traversal periods. 
Roughly, a bouncing period for 
an agent is a walk during which the agent starts from one extremity of $\caL$ and repeats bouncing (i.e., leaving and going back) 
that extremity until it eventually starts the next traversal period. 

Globally, an agent starts from its original position, performs some initial steps ($\tau$ for $A$, and $\tau'$ for $A'$), and then 
alternates between  bouncing periods and  traversal periods. These periods are not synchronous between the two agents because 
there is a delay of $2\gamma$ between them. Nevertheless, by Lemma~\ref{lem:samestate}, if one agent bounces at one extremity of 
$\caL$ during its $k$th bouncing period, then the other agent bounces at the other extremity of $\caL$ during its $k$th 
bouncing period. Similarly, if one agent traverses $\caL$ during its $k$th traversal period, then the other agent traverses
$\caL$ in the opposite direction during its $k$th traversal period. In fact, Lemma~\ref{lem:samestate} 
guarantees that the two agents 
perform symmetric actions with a delay of $2\gamma$, alternating bouncing at the two different extremities of $\caL$, and 
traversing $\caL$ in two opposite directions. 

The following lemma holds, by establishing that whenever one agent is in a bouncing period, the two agents are far apart. 

\begin{lemma}\label{lem:bouncing}
The two agents $A$ and $A'$ do not meet whenever one of them is in a bouncing period. 
\end{lemma}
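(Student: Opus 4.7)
The plan is to combine the symmetric-with-delay structure given by Lemma~\ref{lem:samestate} with a uniform bound on how far the bouncing agent can stray from its extremity. Identify positions in $\caL$ with integers in $[0,L]$, where $L=x+x'+1$, $y_A=0$ and $y_{A'}=L$. By the discussion immediately following Lemma~\ref{lem:samestate}, for $t\ge\tau'$ one has $A'(t)=L-A(t-2\gamma)$, so a meeting at such a $t$ is equivalent to
\[
A(t)+A(t-2\gamma)=L.
\]
For $t\in[\tau,\tau')$ the agent $A'$ is still in its initial walk and, by the pre-$\tau$ mirror through the midpoint of $e$ inherited from the infinite-line construction, is at position $2x+1-A_{\mathrm{inf}}(t)$; within the first execution of $C_i$ the infinite-line position $A_{\mathrm{inf}}(t)$ cannot exceed $|C_i|\le\gamma$, giving $A'(t)\ge 2x+1-\gamma$, which lies deep on the $v_{A'}$ side of $\caL$.

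The key technical step is a uniform bound on the bouncing range: during any bouncing period at an extremity $y$, the agent's distance from $y$ stays at most $M:=|S|+\gamma$. Every excursion constituting the bouncing starts at $y$, runs $\pi'$, and must return to $y$ (an excursion failing to return is, by definition, the subsequent traversal and does not belong to the bouncing period). After a pre-period of at most $|S|$ steps, the state sequence falls into some circuit $C_j$ of the transition digraph of $\pi'$. For the excursion to come back to $y$, the net drift of $C_j$ must be non-positive (a strictly positive drift would carry the agent away forever and eventually reach the opposite extremity). Within one cycle the position deviates by at most $|C_j|$ from its value at the start of the cycle, and since $|C_j|\wedge\gamma$ we have $|C_j|\le\gamma$. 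Together with the pre-period contribution of at most $|S|$, this keeps the distance from $y$ bounded by $|S|+\gamma=M$.

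Granting this, suppose for contradiction that $A$ is bouncing at $y_A$ at time~$t$ and meets $A'$ at~$t$. For $t\in[\tau,\tau')$, $A(t)\le M=|S|+\gamma$ while $A'(t)\ge 2x+1-\gamma$, and these are incompatible once $x\ge 2\gamma+|S|$, which is guaranteed by the construction. For $t\ge\tau'$, the meeting identity forces $A(t-2\gamma)\ge L-M$. Examining the phase of $A$ at time $t-2\gamma$ yields two possibilities: either (a)~$A$ is still in the same bouncing period, so $A(t-2\gamma)\le M$ and thus $L\le 2M=2|S|+2\gamma$; or (b)~$A$ is in the immediately preceding phase (previous traversal or initial walk), which ended at $y_A=0$ at the start of this bouncing, so the ``at most one edge per round'' speed bound gives $A(t-2\gamma)\le 2\gamma$ and hence $L\le M+2\gamma=|S|+3\gamma$. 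Both bounds are contradicted by
\[
L=x+x'+1\ge 4\gamma+2|S|+1,
\]
which follows from $x,x'\ge 2\gamma+|S|$. The case where $A'$ (rather than $A$) is the bouncer is handled symmetrically.

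The only substantive difficulty in this plan is establishing the uniform bound $M\le|S|+\gamma$: it has to be extracted purely from the functional (out-degree-one) nature of the transition digraph of $\pi'$, the divisibility $|C_j|\wedge\gamma$, and the fact that positive-drift circuits cannot produce returning excursions---without any reference to the line length~$L$. Once that bound is in place, what remains is a routine triangle-inequality-style comparison between the $2\gamma$-bounded reach of $A(t-2\gamma)$ and the length~$L$ guaranteed by the construction.
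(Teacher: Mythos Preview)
Your argument is correct and shares the paper's core idea---the bouncing agent stays near one extremity while the $2\gamma$-delayed mirror symmetry keeps the other agent near the opposite extremity---but the paper's execution is much terser. Two points of contrast. First, the paper asserts the sharper bouncing bound $|S|$ rather than your $|S|+\gamma$: this follows because the pre-period length plus $|C_j|$ together cannot exceed $|S|$, as they account for distinct states in one component of the transition digraph of $\pi'$. Second, the paper bypasses your explicit identity $A'(t)=L-A(t-2\gamma)$ and the case split on what $A$ was doing at time $t-2\gamma$; it simply observes that a $2\gamma$ time shift changes any position by at most $2\gamma$, so the non-bouncing agent lies within $2\gamma+|S|$ of the opposite extremity, and $x,x'>2\gamma+|S|$ finishes the argument in one line. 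Your separate treatment of $t\in[\tau,\tau')$ via the infinite-line position $A_{\mathrm{inf}}$ is likewise subsumed by that uniform observation (and, as a minor point, your bound $A_{\mathrm{inf}}(t)\le|C_i|$ must hold on all of $[\tau,\tau')$, not only ``within the first execution of $C_i$''; it does, by the negative drift of $C_i$, but you should say so).
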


\begin{proof}
There is a delay of $2\gamma$ between the two agents. During such a period of time, an agent can travel a distance at most 
$2\gamma$. Also, during its bouncing period, an agent cannot go at distance more than $|S|$ from the extremity of the line where 
it is bouncing. On the other hand, by the definitions of $t_0$ and $\tau>t_0$, we have $x>2\gamma+|S|$, and thus 
$x'>2\gamma+|S|$ as well. Therefore, when one of the agents is in a bouncing period, the distance between the two agents is at 
least $2\gamma+|S|$, and thus they cannot meet.
\end{proof}

The following lemma holds, by using the fact that $\gamma$ is the least common  multiple of the circuit lengths in the transition digraph of the agents, and by applying the Parity Lemma. 

\begin{lemma}\label{lem:traversal}
The two agents $A$ and $A'$ do not meet when both of them are in a traversal period. 
\end{lemma}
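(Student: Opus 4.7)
The plan is to combine the Parity Lemma with a case analysis based on the cycle structure of the transition digraph. By the Parity Lemma applied to the odd initial distance~$1$, it suffices to prove that at every time $T$ at which both agents are in a traversal, $|q_A(T)-q_{A'}(T)|$ is even, where $q_A,q_{A'}$ denote cumulative idle counts.

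First, I would make precise the ``transcript matching'' between $A$ and $A'$. During $[0,\tau]$, the agents execute identical actions by the reflection symmetry of the initial segment of the infinite line, so $q_A(t)=q_{A'}(t)$. Between $\tau$ and $\tau'$, $A'$ still follows the infinite-line behaviour of $A$, which remains entirely inside the cycle $C_i$ started at $s_i$ and accumulates $D=2\gamma\,\eta_i/|C_i|$ idle steps, where $\eta_i$ counts the states $s\in C_i$ with $\lambda(s)=-1$; since $|C_i|$ divides $\gamma$, $D$ is even. A short preliminary computation (using that one period of $C_i$ produces a fixed net displacement) shows that $x'-x$ is even, hence the edges incident to the two extremities of $\caL$ carry the same colour, so the two extremities are locally indistinguishable; combined with Lemma~\ref{lem:samestate}, this implies that from time $\tau'$ on, $A'$ replays exactly the transcript of actions of $A$ after time $\tau$, with delay~$2\gamma$. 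Putting these three observations together yields
\[
q_A(T)-q_{A'}(T)\;=\;\bigl(q_A(T)-q_A(T-2\gamma)\bigr)-D\;\equiv\;q_A(T)-q_A(T-2\gamma)\pmod 2,
\]
so it is enough to prove that $A$'s idle count over the window $(T-2\gamma,T]$ is even whenever both agents are in a traversal at time~$T$.

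The case analysis then runs as follows. If both agents are in the \emph{same} $k$-th traversal of $A$ and $A$ has been in it for at least $2\gamma+|S|$ steps, then throughout $(T-2\gamma,T]$ agent $A$ iterates $\pi'$ strictly inside the cycle $C_{j(k)}$ reached after the (length-$\leq|S|$) transient; since $|C_{j(k)}|$ divides $2\gamma$, the window covers exactly $2\gamma/|C_{j(k)}|$ full periods of $C_{j(k)}$, so its idle count equals $2(\gamma/|C_{j(k)}|)\eta_{j(k)}$, which is even, and the Parity Lemma precludes a meeting. In every other configuration---either $A$ has been in its $k$-th traversal for fewer than $2\gamma+|S|$ steps (so the window straddles the transient), or $A$ and $A'$ are in two consecutive traversals of $A$ (possible only when the intervening bouncing has length less than $2\gamma$, because each traversal has length at least $x+x'+1>2\gamma$)---I argue directly: by the alternating pattern of extremities derived inductively from Lemma~\ref{lem:samestate}, each agent sits within distance $2\gamma+|S|$ of an \emph{opposite} endpoint of $\caL$, hence the inter-agent distance is at least $(x+x'+1)-2(2\gamma+|S|)>0$, using $x>2\gamma+|S|$ and $x'\geq x$.

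The main obstacle I anticipate is the transcript-matching step from time $\tau'$ on: one has to verify that at every synchronized pair of transcript steps, the positions of $A$ and $A'$ in $\caL$ induce identical local views (same node-degree and same incoming-edge colour). This hinges on Lemma~\ref{lem:samestate} for the states at extremity hits, on the inductive observation that consecutive extremity hits of $A$ and $A'$ alternate between \emph{opposite} sides of $\caL$, and on the parity equality $x+x'\equiv 0\pmod 2$ used above to align the edge colours at the two endpoints of~$\caL$.
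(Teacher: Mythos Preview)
Your proof is correct and follows the same strategy as the paper: handle the phase when one agent is still close to an extremity by a distance argument (using $x'>x>2\gamma+|S|$), and once both agents are executing the recurrent cycle $C_i$ of the current traversal, invoke the Parity Lemma via the fact that the delay $2\gamma$ is an \emph{even} multiple of $|C_i|$. Your write-up is in fact more careful than the paper's on two points the paper leaves implicit: you explicitly verify that $x'-x$ is even (so the edge colours incident to the two endpoints of $\caL$ agree, which is what makes the step-for-step transcript replay from time $\tau'$ onward well-defined and underpins Lemma~\ref{lem:samestate}), and your global identity $q_A(T)-q_{A'}(T)=(q_A(T)-q_A(T-2\gamma))-D$ together with the separate treatment of consecutive traversals cleanly covers cases the paper's terser local argument does not spell out.
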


\begin{proof}
When both agents are in a traversal period, they started their period in the same state, from Lemma~\ref{lem:samestate}. Hence, 
they are eventually both performing the same circuit of states $C_i$. This occurs after the same initial time of duration at most 
$|S|$. This time corresponds to the time it takes to reach the circuit $C_i$ from the initial state at which the agents started 
their traversal period. As we already observed in the proof of Lemma~\ref{lem:bouncing}, since $x'>x>2\gamma+|S|$, the two agents 
are far apart during the transition period before both of them have entered the circuit $C_i$ executed during the considered 
traversal. Thus we can now assume that the two agents are performing $C_i$, traversing the line in two opposite directions. We 
prove that they cross along an edge, and hence they do not meet. Since the delay between the two agents is $2\gamma$ and since 
$\gamma$ is a multiple of $|C_i|$ for any $i\in\{1,\dots,r\}$, the delay is an even multiple of the length of the circuit $|C_i|$ 
performed at this traversal. As a consequence, at any step of their traversal periods, 
the number of times one agent was idle when the 
other was not, is even. The Parity Lemma with $|q-q'|=2\gamma/|C_i|$ then insures that the distance between the two agents remains 
odd during the whole traversal period. Thus they do not meet.
\end{proof}

\medbreak
\noindent{\bf Proof of Theorem~\ref{theo:lo}.} 
The two agents start an initial period that lasts $\tau$ steps. By Lemma~\ref{lem:beforetau} they do not meet during this 
period. Then the two agents alternate  between bouncing periods and traversal periods. By Lemma~\ref{lem:bouncing}, they do not 
meet when one of the two agents is in a bouncing period. When the two agents are in a traversal period, Lemma~\ref{lem:traversal} 
guarantees that they do not meet. Hence the two agents never meet, in spite of starting from non perfectly symmetrizable positions,
and thus they do not rendezvous in $\caL$. 
By the construction of the line $\caL$ and the setting of $\gamma$, we get that  $\caL$ is of length $O(|S|^{|S|})$. Therefore, rendezvous with simultaneous start in lines of size at most $n$ requires agents with at least $\Omega(\log\log n)$ memory bits. 
\hfill $\Box$ 

%----------------------------------------------------------
\subsection{The lower bound $\Omega(\log \ell)$}
%----------------------------------------------------------

In this section we prove that rendezvous with simultaneous start in trees with $\ell$ leaves requires 
$\Omega(\log \ell)$ bits of memory, even in the class of trees with maximum degree 3. Together with the lower bound of
$\Omega( \log\log n)$ on memory size needed for rendezvous in the $n$-node line\footnote{Notice that  the lower bound $\Omega( \log\log n)$ 
 holds for $n$-node trees of maximum degree 3 with many leaves as well: it suffices to attach identical binary trees 
on each extremity of the line, and the argument from the previous section goes through.}
established in Theorem~\ref{theo:lo}, this result proves
that our upper bound $O(\log \ell + \log\log n)$ from Section \ref{sec:alg} cannot be improved even for trees of maximum degree 3.

\begin{theorem}
For infinitely many integers $\ell$, there exists an infinite family of trees with $\ell$ leaves, for which rendezvous with simultaneous start requires
$\Omega(\log \ell)$ bits of memory.
\end{theorem}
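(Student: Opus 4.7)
The plan is to establish, for every integer $K \geq 2$ in an infinite set, a tree $T_K$ with maximum degree~$3$ and $\ell = \Theta(K)$ leaves, together with a port labeling and two non-perfectly-symmetrizable initial positions, on which no pair of identical $K$-state agents achieves rendezvous with simultaneous start. Since $\ell = \Theta(K)$, this forces $K = \Omega(\ell)$, hence $\Omega(\log \ell)$ memory bits, for infinitely many values of $\ell$. Conceptually, the construction is a binary-tree adaptation of the $\Omega(\log n)$ lower bound established in~\cite{CKP} for rings: since a binary tree with $\ell$ leaves has only $\Theta(\ell)$ nodes, the goal is to transfer the ring-style symmetry obstruction to a branching, degree-$3$ structure.

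Concretely, I would take $T_K$ to consist of a short central path joining two identical binary ``hats,'' each a balanced binary tree with $\Theta(K)$ leaves, so that $T_K$ has maximum degree~$3$ and $\ell = \Theta(K)$ leaves. The two starting positions are chosen as topologically symmetric nodes of the two hats. The port labeling is crafted so that the two hats, although not isomorphic as port-labeled trees (which is what makes the initial positions non-perfectly-symmetrizable and thus legitimate instances of the rendezvous problem), are indistinguishable along every trajectory a $K$-state agent can produce. The argument that the two agents are fooled adapts the pigeonhole-and-periodicity analysis from the proof of Theorem~\ref{theo:lo}: the restriction of the transition function to degree-$2$ nodes decomposes into at most $K$ circuits of combined length at most $K$, and over the $\Theta(K)$ branching points of the two hats some (state, incoming-port) configuration must be repeated in symmetric positions. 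Combining this with the Parity Lemma --- applied with an even difference $|q-q'|$, which holds by the symmetric schedule of idle steps of the two agents --- one concludes that the inter-agent distance, initially odd, stays odd forever, so the agents never share a node.

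The main obstacle, and the most delicate step, is ensuring that the chosen pair of initial positions is genuinely not perfectly symmetrizable while still being ``$K$-indistinguishable.'' That is, one must exhibit \emph{some} port labeling of $T_K$ whose preserving automorphism maps one initial position to the other (to certify topological symmetry) but design the adversarial labeling $\mu_K$ so that no such automorphism exists for $\mu_K$ itself --- yet the witness of asymmetry is placed at a location that a $K$-state agent cannot reach, or cannot distinguish from its symmetric counterpart, within the number of rounds before a rendezvous would otherwise be forced. As stressed in the bibliographic note, the distinction between topological symmetry and symmetry with respect to a given labeling is precisely where~\cite{FP2} contained an error, so care is needed to stay on the correct side of this distinction. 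Once such a labeling is in hand, the lower bound $\Omega(\log \ell)$ follows immediately by letting $K$ range over the infinite sequence for which the construction is valid, since rendezvous then provably fails for any agent with $o(\log \ell)$ bits of memory.
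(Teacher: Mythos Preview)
Your construction has a fatal flaw in the treatment of ``perfectly symmetrizable.'' Recall the definition: positions $u,v$ are perfectly symmetrizable if there \emph{exists some} port labeling $\mu$ and an automorphism preserving $\mu$ carrying $u$ to $v$. You propose two topologically \emph{identical} balanced binary hats joined by a path, with the agents at topologically symmetric nodes; but then the positions \emph{are} perfectly symmetrizable --- just copy the port labeling of one hat onto the other via the topological automorphism. The fact that your particular adversarial labeling $\mu_K$ admits no preserving automorphism is irrelevant: that only says the positions are not symmetric \emph{with respect to} $\mu_K$, which (as the bibliographic note warns) is exactly the weaker notion that does \emph{not} suffice. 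Your last paragraph even states the requirement backwards: exhibiting a labeling with a preserving automorphism swapping the positions would certify that they \emph{are} perfectly symmetrizable, which is precisely what you must avoid.

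The paper's proof sidesteps this by making the two sides topologically \emph{non}-isomorphic, so the starting positions are genuinely not perfectly symmetrizable regardless of labeling. The real work is then a counting argument you are missing entirely: one builds $2^{\ell/2-1}$ pairwise non-isomorphic ``side trees'' and associates to each a \emph{behavior function} $q:S\to S\times\{1,\dots,D\}$ recording, for each entering state, the exiting state and the duration of the tour. There are at most $(KD)^K$ such functions with $D<3K\ell/2$, and for $k\le\frac{1}{3}\log\ell$ this is fewer than $2^{\ell/2-1}$; hence two non-isomorphic side trees $T_1,T_2$ share the same behavior function, and the agent cannot tell them apart. The periodicity machinery from Theorem~\ref{theo:lo} and the Parity Lemma play no role here; the symmetry on the joining path and the equality of behavior functions already force the agents to remain on opposite halves or cross on the odd-length line without meeting.
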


\begin{proof}
Consider an integer $\ell=2i$, for any even $i$. Consider an $(i+1)$-node path with a distinguished endpoint called the root. 
%Ports at both ends of any edge of the line have the 
%same label~0 or~1. (This corresponds to a 2-edge-coloring of the line). The port at the root is 0.
To every internal node $x$ of the path attach either a new leaf, or a new node $y$ of degree 2 with a new leaf $z$ attached to it.
%In the latter case the port numbers  corresponding to edge $\{x,y\}$ are 2 at $x$ and $1$ at $y$ and both ports corresponding to edge $\{y,z\}$ are 0.    
There are $2^{i-1}=2^{\ell/2-1}$ possible resulting non-isomorphic rooted trees. Call them {\em side trees}. Note that non-isomorphic is meant here
without the port-preserving clause: there are so many rooted trees which cannot be mapped to each other by {\em any} isomorphism, not only by
any isomorphism preserving port numbering. Fix an arbitrary port labeling in every side tree. 

For any pair of side trees $T'$ and $T''$ and for any positive even integer $m$, consider the tree $T$ consisting of side trees $T'$ and $T''$
whose roots are joined by a path of length $m+1$ (i.e., there are $m$ added nodes of degree two). Ports at the added nodes of degree two
are labeled as follows: both ports at the central edge have label~0, and ports at both ends of any other edge of the line have the 
same label~0 or~1. (This corresponds to a 2-edge-coloring of the line). 
Call any tree resulting from this construction a {\em two-sided tree}. Any such tree has $\ell$ leaves and maximum degree 3.
For any two-sided tree consider initial positions of the agents at nodes $u$ and $v$ of the joining path adjacent to roots of its side trees.

Consider agents with $k$ bits of memory (thus with $K=2^k$ states).
A {\em tour} of a side tree associated with an initial position ($u$ or $v$) is the part of the trajectory of the agent in this side tree between
consecutive visits of the associated initial position. Observe that the maximum duration $D$ of a tour is smaller than  
$K\cdot(3 i)$. Indeed, the number
of nodes in a side tree is at most $3i-1$, hence the number of possible pairs (state, node of the side tree) is at most
$K\cdot (3i-1)$. A tour of longer duration than this value would cause the agent to leave the same node twice in the same state,
implying an infinite loop. Such a tour could not come back to the initial position.

For a fixed agent with the set $S$ of states and a fixed side tree, we define the function $p: S \to S$ as follows.
 Let $s$ be the state in which the agent starts a tour. Then $p(s)$ is the state in which the agent finishes the tour. 
 Now we define the function $q: S \to S \times \{1,\dots,D\}$,
called the {\em behavior function},
by the formula $q(s)=(p(s),t)$, where $t$ is the number of rounds to complete the tour when starting in state $s$. The number of possible  behavior functions  is at most $F=(KD)^K$. A behavior function depends on the side tree for which it is constructed.

Suppose that $k \leq \frac{1}{3}\log \ell$. We have $D<3Ki=\frac{3}{2}K\ell$, hence $KD<\frac{3}{2}K^2\ell$. Hence we have
$\log K + \log\log (KD) \leq k+ \log\log(\frac{3}{2}K^2\ell) \leq k+2+\log k +\log\log \ell,$
which is smaller than $\frac{2}{3}\log \ell$ for sufficiently large $k$. It follows that $K\log(KD)<\ell ^{2/3}<\ell /2-1$, which implies
$F=(KD)^K<2^{\ell /2-1}$. Thus the number of possible behavior functions is strictly smaller than the total number of side trees.
It follows that there are two side trees $T_1$ and $T_2$ for which the corresponding behavior functions are equal.

Consider two instances of the rendezvous problem for any length $m+1$ of the joining line, where $m$ is a positive even integer: one in which both side trees are equal to $T_1$,
and the other for which one side tree is $T_1$ and the other is $T_2$. Rendezvous is impossible in the first instance because in this
instance initial positions of the agents form a symmetric pair of nodes with respect to the given port labeling. 
Consider the second instance, in which the initial positions of the
agents do not form a perfectly symmetrizable pair. Because of the symmetry of labeling of the joining line, agents cannot meet inside any of the side
trees. Indeed, when one of them is in one tree, the other one is in the other tree. Since the behavior function associated with side
trees $T_1$ and $T_2$ is the same, the agents leave these trees always at the same time and in the same state. Hence they cannot meet on the line, in view of its odd length and symmetric port labeling. This implies that they never meet, in spite of initial positions that are not perfectly symmetrizable. Hence rendezvous in 
the second instance requires $\Omega(\log \ell)$ bits of memory.
\end{proof}

%%%%%%%%%%%%%%%%%%%%%%%%%%%%%%%%%%%%%%%%%%%%%%%%%%%%%%%%%%%
\bibliographystyle{plain}

%%%%%%%%%%%%%%%%%%%%%%%%%%%%%%%%%%%%%%%%%%%%%%%%%%%%%%%%%%% 

\end{document}